\newtheorem{theorem}{Theorem}[section]
\newtheorem{lemma}[theorem]{Lemma}
\newtheorem{proposition}[theorem]{Proposition}
\newtheorem{problem}[theorem]{Problem}
\newenvironment{problem 1}[1][Problem 1]{\begin{trivlist}
\item[\hskip \labelsep {\bfseries #1}]}{\end{trivlist}}
\newenvironment{proof}[1][Proof]{\begin{trivlist}
\item[\hskip \labelsep {\bfseries #1}]}{\end{trivlist}}
\newenvironment{claim 1}[1][Claim 1]{\begin{trivlist}
\item[\hskip \labelsep {\bfseries #1}]}{\end{trivlist}}
\newenvironment{claim 2}[1][Claim 2]{\begin{trivlist}
\item[\hskip \labelsep {\bfseries #1}]}{\end{trivlist}}
\newenvironment{claim 3}[1][Claim 3]{\begin{trivlist}
\item[\hskip \labelsep {\bfseries #1}]}{\end{trivlist}}
\newenvironment{claim 4}[1][Claim 4]{\begin{trivlist}
\item[\hskip \labelsep {\bfseries #1}]}{\end{trivlist}}
\newcommand{\bal}{{\rm bal}} 
\newcommand{\qed}{\hfill$\Box$} 
\title{Parameterized TSP: Beating the Average\footnote{GG's research was partially supported by Royal Society Wolfson Research Merit Award}}
\author{Gregory Gutin\\Royal Holloway, University of London\\ e-mail: \texttt{gutin@cs.rhul.ac.uk}  \and Viresh Patel\\Queen Mary, University of London\\ e-mail: \texttt{viresh.patel@qmul.ac.uk} }
\begin{document}
\maketitle

\begin{abstract}
\noindent In the Travelling Salesman Problem (TSP), we are given a complete graph $K_n$ together with an integer weighting $w$ on the edges of $K_n$, and we are asked to find a Hamilton cycle of $K_n$ of minimum weight. Let $h(w)$ denote the average weight of a Hamilton cycle of $K_n$ for the weighting $w$. Vizing (1973) asked whether there is a polynomial-time algorithm which always finds a Hamilton cycle of weight at most $h(w)$. He answered this question in the affirmative and subsequently Rublineckii  (1973) and others described several other TSP heuristics satisfying this property. In this paper, we prove a considerable generalisation of Vizing's result: for each fixed $k$, we give an algorithm that decides whether, for any input edge weighting $w$ of $K_n$, there is a Hamilton cycle of $K_n$ of weight at most $h(w)-k$ (and constructs such a cycle if it exists). For $k$ fixed, the running time of the algorithm is polynomial in $n$, where the degree of the polynomial does not depend on $k$ (i.e., the generalised Vizing problem is fixed-parameter tractable with respect to the parameter $k$).
\end{abstract}

\section{Introduction}

The Travelling Salesman Problem (TSP) is one of the most well-known and widely studied combinatorial optimisation problems. In this problem, we are given an $n$-vertex complete graph $K_n$ with weights on its edges and we are required to find a Hamilton cycle in $K_n$ of minimum total weight. In its full generality,  TSP is not only NP-hard, but also NP-hard to approximate to within any constant factor. Therefore there has been much attention in developing approximation algorithms for restricted instances of TSP. 
In this paper, we consider general TSP, but rather than seeking a Hamilton cycle of minimum weight, we seek a Hamilton cycle that beats the average weight of all Hamilton cycles by some given value.

Let us fix some notation in order to state our result. As usual $V(G)$ and $E(G)$ denote the vertex and edge sets of a graph $G$. Let $w$ be an integer edge weighting of $K_n$, i.e.\ $w: E(K_n) \rightarrow \mathbb{Z}$, and let $G$ be a subgraph of $K_n$. We  write
\[
w(G):= \sum_{e \in E(G)} w(e)
\hspace{1 cm} \text{and} \hspace{1 cm}
w[G]:= \sum_{e \in E(G)} |w(e)| 
\]
and we define the {\it density} $d=d(w)$ of $w$ to be the average weight of an edge, i.e.\ $d:=w(K_n)/ \binom{n}{2}$.  
Note that $\mathbb{E}(w(\tilde{H})) = dn$, where $\tilde{H}$ is a uniformly random Hamilton cycle of $K_n$, and so there always exists a Hamilton cycle $H^*$ satisfying $w(H^*) \leq dn$. 

 Vizing \cite{Viz} asked whether there is a polynomial-time algorithm which, given an integer edge weighting $w$ of $K_n$, always finds a Hamilton cycle $H^*$ of $K_n$ satisfying $w(H) \leq dn$. He answered this question in the affirmative and subsequently Rublineckii \cite{Rub} described several other TSP heuristics satisfying this property. Such TSP heuristics including more recent ones are given in \cite{GutYeoZve}. A natural question extending Vizing's question is the following: for each fixed $k$ is there a polynomial-time algorithm which, given $w$, determines if there exists a Hamilton cycle $H^*$ satisfying $w(H^*) \leq dn - k$? We give an affirmative answer to this question.

\begin{theorem}
\label{th:main}
There exists an algorithm which, given $(n,w,k)$ as input, where $n,k \in \mathbb{N}$ and $w: E(K_n) \rightarrow \mathbb{Z}$, determines whether there exists a Hamilton cycle $H^*$ of $K_n$ satisfying 
\[
w(H^*) \leq dn - k
\]
(and outputs such a Hamilton cycle if it exists) in time $O(k^3)! + O(k^3n) + O(n^7) = f(k)n^{O(1)}$.
\end{theorem}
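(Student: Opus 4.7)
The plan is a win/win argument followed by bounded-size enumeration. First I would replace $w$ by the balanced integer weighting $\bar w(e) := \binom{n}{2} w(e) - w(K_n)$, so that $\sum_e \bar w(e) = 0$ and the condition $w(H^*) \le dn - k$ becomes equivalent to $\bar w(H^*) \le -k\binom{n}{2}$. Under $\bar w$ the average Hamilton-cycle weight is $0$, so we are asking whether some cycle beats the average in $\bar w$ by at least $k\binom{n}{2}$.

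The key lemma I would aim for is a dichotomy governed by some polynomial $p(k)$: either (a) the total absolute balanced weight $\bar w[K_n] := \sum_e |\bar w(e)|$ exceeds $p(k) \binom{n}{2}$, in which case a Hamilton cycle with $\bar w(H) \le -k\binom{n}{2}$ exists and can be constructed in polynomial time; or (b) $\bar w[K_n] \le p(k) \binom{n}{2}$, in which case the set $U$ of vertices incident to some edge of nonzero balanced weight has size $|U| = O(k^3)$. The bound on $|U|$ in case (b) will come from an averaging argument: each vertex of $U$ is incident to at least one nonzero-balanced-weight edge of integer weight, so the absolute-weight budget $p(k)\binom{n}{2}$ caps the size of $U$ after the scaling is accounted for.

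In case (b) the problem reduces to a kernel on $U$. Since every edge $e$ with $\bar w(e) \ne 0$ has both endpoints in $U$, the balanced weight $\bar w(H)$ of any Hamilton cycle $H$ depends only on the edges of $H$ with both endpoints in $U$, which form a vertex-covering linear forest $F$ on $U$. Conversely, any such linear forest extends to a Hamilton cycle of $K_n$ via paths through $V(K_n)\setminus U$ (whose incident edges all have $\bar w$-weight $0$), provided $n$ is large enough. Thus the algorithm enumerates vertex-covering linear forests on $U$, selecting one of minimum $\bar w$; since there are at most $|U|! = O(k^3)!$ such configurations and each extension check is polynomial, this produces the $O(k^3)!$ and $O(k^3 n)$ terms in the running time. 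I expect the $O(n^7)$ term to account for the preprocessing step that either outputs a good cycle in case (a) or identifies the kernel $U$ in case (b), plausibly via a Vizing-type base cycle combined with a matching/$2$-factor subroutine.

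The hardest step will be the constructive direction of the dichotomy, namely case (a). A variance or counting argument shows easily that when $\bar w[K_n]$ is large enough a cycle with $\bar w(H) \le -k\binom{n}{2}$ must exist, but turning this into a polynomial-time algorithm is more delicate. The natural approach is to exhibit explicit local-improvement gadgets — short paths or matchings with highly negative balanced weight — that can be spliced into a Vizing-type base cycle by a bounded number of $2$-opt or $3$-opt swaps, and to show that enough such gadgets exist whenever $\bar w[K_n]$ exceeds $p(k)\binom{n}{2}$. A secondary technical point is ensuring the enumeration in case (b) is truly $O(k^3)!$ rather than larger: this will require reducing equivalent linear-forest configurations and verifying that the extension through $V(K_n) \setminus U$ can always be carried out when $n$ is larger than a polynomial in $k$.
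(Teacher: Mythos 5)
There is a genuine gap, and it sits at the heart of your dichotomy. Your case (a) claim --- that $\bar w[K_n] > p(k)\binom{n}{2}$ forces the existence of a Hamilton cycle with $\bar w(H)\le -k\binom{n}{2}$ --- is false. Take $w(e)=1$ for the $n-1$ edges at a single vertex $v$ and $w(e)=0$ otherwise. Every Hamilton cycle uses exactly two edges at $v$, so every Hamilton cycle has weight exactly $2=dn$ and nothing beats the average by even $1$; yet the centred weighting has $\bar w[K_n]=\Theta(n^3)$, which exceeds $p(k)\binom{n}{2}$ for any fixed $k$ once $n$ is large. The missing idea is that one must first quotient by the large subspace of weightings on which all Hamilton cycles have equal weight: one may add $\lambda_v I_v$ to $w$ for each vertex $v$ (where $I_v$ is the indicator of edges at $v$) without changing which cycles beat the average, and the relevant norm is the minimum of $w'[K_n]$ over all such equivalent $w'$, not the $\ell_1$ norm of any particular representative. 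The paper's structural result (Theorem~\ref{th:main2}) establishes exactly this dichotomy, and detects which case holds via a third quantity --- the total imbalance $\sum_C |w(v_1v_2)+w(v_3v_4)-w(v_1v_3)-w(v_2v_4)|$ over all $4$-cycles $C$ --- which is sandwiched between the two norms and is gauge-invariant. The constructive direction does use $2$-opt swaps along unbalanced $4$-cycles embedded in a random base cycle, much as you suggest, but it is triggered by the $4$-cycle norm being large, not the raw $\ell_1$ norm.

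Your case (b) has a second, independent gap. Even after the correct normalisation the support of $w^*$ can have $\Theta(kn)$ edges meeting all $n$ vertices, so there is no set $U$ of size $O(k^3)$ containing all nonzero-weight edges; the averaging argument you sketch only bounds $|U|$ by twice the weight budget, i.e.\ by $O(p(k)n^2)$, which is useless. The paper instead proceeds as follows: (i) find a maximum matching $Q$ among the negatively weighted edges and, if $|Q|>10^5k$, win by derandomising a random extension of $Q$ to a Hamilton cycle; (ii) otherwise all negative edges meet a set $X$ of $O(k)$ vertices, positive edges outside $X$ can only increase the weight, and zero-weight edges outside $X$ are abundant enough for a Dirac-type completion, so the problem reduces to optimally routing the cycle through $X$; the $O(k^3)!$ enumeration is then over $X$ together with a candidate set $Y$ of $O(k^3)$ attachment vertices in $\overline{X}$ produced by an exchange argument, not over the support of the weighting. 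Your win/win outline and the $2$-opt improvement idea are in the right spirit, but without the equivalence relation the plan cannot be repaired as stated.
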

Note that our algorithm includes arithmetic operations which are assumed to take time $O(1)$ and so our running times here and throughout are stated in the \emph{strong} sense (see e.g.\ \cite{GroLovSch}). To obtain the running time in the \emph{weak} sense, one simply multiplies by $\log M$, where $M := \max_{e \in E(K_n)}|w(e)|$ for the input instance $(n,w,k)$.

Theorem~\ref{th:main} immediately implies that the following NP-hard problem (which is essentially TSP) is fixed-parameter tractable\footnote{For a recent introductions to parameterised algorithms and complexity, see monographs \cite{Cyg,DowFel}.} when parameterised by $k$.

\bigskip
\noindent
\textsc{Travelling Salesman Problem Below Average (${\rm TSP}_{\rm{BA}}$)} 

\medskip
\noindent
\begin{tabular}{p{1.7cm}p{11cm}}
\textit{Instance}\,:& $(n,w,k)$, where $n,k \in \mathbb{N}$ and $w: E(K_n) \rightarrow \mathbb{Z}$\\
\textit{Question}\,:&Is there a Hamilton cycle $H^*$ of $K_n$ satisfying $w(H^*) \leq dn - k$? 
\end{tabular}

\bigskip

Theorem~\ref{th:main} is proved by applying a combination of probabilistic, combinatorial, and algorithmic techniques, some of which are inspired by \cite{KuhOstPat}.
The key step to proving Theorem~\ref{th:main} is Theorem~\ref{th:main2} below, which characterises those weightings in which all Hamilton cycles have weight close to the average. We believe this result will have further applications. 
\begin{theorem}
\label{th:main2}
For any $n,k \in \mathbb{N}$ satisfying $n>5000(k+1)$, and $w:E(K_n) \rightarrow \mathbb{Z}$, in time $O(n^7)$ we can find either
\begin{itemize}
\item[{\rm (a)}] A Hamilton cycle $H^*$ of $K_n$ satisfying $w(H^*) < dn - k$, or
\item[{\rm (b)}] A weighting $w^*:E(K_n) \rightarrow \mathbb{Z}$ and $\alpha \in \mathbb{Z}$ satisfying $w^*(H) = w(H) + \alpha$ for all Hamilton cycles $H$ of $K_n$ and $w[K_n] \leq 4000kn$. 
\end{itemize}
\end{theorem}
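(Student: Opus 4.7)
My starting point is to identify the equivalence class of weightings that assign the same weight to every Hamilton cycle of $K_n$. I claim that an integer weighting $g: E(K_n) \to \mathbb{Z}$ is Hamilton-cycle-constant if and only if there exist an integer $c$ and a vertex function $f: V(K_n) \to \mathbb{Z}$ such that $g(uv) = c + f(u) + f(v)$. One direction is immediate, since any Hamilton cycle uses each vertex in exactly two edges, giving $g(H) = cn + 2\sum_v f(v)$. For the converse, any 2-opt swap $ab, cd \mapsto ac, bd$ on a Hamilton cycle leaves the weight invariant, yielding the functional equation $g(ab) + g(cd) = g(ac) + g(bd)$ for every quadruple; a short direct argument (fixing a base vertex and extracting $f$) recovers the stated form, with $f, c$ integer since $g$ is. Writing $(f \oplus f)(uv) := f(u) + f(v)$, the task in conclusion (b) therefore reduces to finding integer $c$ and $f$ such that $w^* := w - (f \oplus f) - c$ satisfies $w^*[K_n] \leq 4000kn$, whereupon $\alpha = -cn - 2\sum_v f(v)$ works automatically.

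The plan is to choose $c$ and $f$ so that the reduced weighting $w_0 := w - (f \oplus f) - c$ is (approximately) \emph{degree-balanced}, meaning all weighted degrees are close to zero (which in particular forces the density of $w_0$ to be small). This amounts to solving a linear system of size $n$ whose rational solution has a closed form; a careful integer adjustment then gives an integer $f$ for which $w_0$ has weighted degrees of magnitude $O(1)$ at every vertex. Having constructed $w_0$, I branch on $w_0[K_n]$: if $w_0[K_n] \leq 4000kn$, output $(w_0, \alpha)$ for case (b); otherwise, my goal is to construct explicitly a Hamilton cycle $H^*$ with $w_0(H^*) < -k$, which translates to $w(H^*) < dn - k$ and delivers case (a).

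For the substantive large-$L^1$ case, my approach is to exploit a minimum-$w_0$-weight $2$-factor $F$ of $K_n$ (a vertex-disjoint union of cycles spanning $V$), which can be computed in polynomial time via the standard reduction to minimum-weight perfect matching on an auxiliary graph. A uniformly random $2$-factor has expected $w_0$-weight close to zero, but since $w_0[K_n]$ is large and $w_0$ is nearly degree-balanced, the optimum $F$ should be substantially negative. I would then patch $F$ into a single Hamilton cycle by iteratively merging pairs of components via 2-opt-style swaps: each merge removes one edge from each of two cycles and inserts two crossing edges, and the number of merges is at most $n/3$ (since each cycle has at least three vertices). The average cost of each merge can be controlled using the near-zero weighted degrees of $w_0$, and the hypothesis $n > 5000(k+1)$ provides the slack needed to ensure the accumulated patching overhead is strictly less than the amount by which $w_0(F)$ beats $-k$.

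The main obstacle I anticipate is the quantitative book-keeping in the large-$L^1$ case: establishing that $w_0[K_n] > 4000kn$ forces the minimum $2$-factor to have weight substantially less than $-k$, while simultaneously bounding the cumulative patching overhead. A crude variance computation for a uniformly random Hamilton cycle $\tilde H$ gives $\mathrm{Var}[w_0(\tilde H)] = \frac{2}{n-1}\sum_e w_0(e)^2$, yielding only a standard deviation of order $\sqrt{k}$, so a purely probabilistic existence argument on a single random cycle is inadequate; the constructive $2$-factor-plus-patching route, combined with the slack baked into the constants $4000$ and $5000(k+1)$, appears to be the natural way to close the gap. All steps should fit within the declared $O(n^7)$ time budget, with the dominant cost being the matching computation on the auxiliary graph of $O(n^2)$ vertices.
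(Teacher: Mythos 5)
Your setup is sound: the characterisation of Hamilton-cycle-constant weightings as $g(uv)=c+f(u)+f(v)$ is correct (it is the span of the indicator weightings $I_v$ used in the paper's equation (5)), and one can even check that the degree-balanced representative $w_0$ has $L^1$-norm within a constant factor of the $L^1$-minimal equivalent weighting, so branching on $w_0[K_n]$ versus $Ckn$ is a legitimate dichotomy. The problem is that the entire content of the theorem sits in the branch you leave unproved: that $w_0[K_n]>4000kn$ (with $w_0$ degree-balanced) forces the existence of --- and lets you construct --- a Hamilton cycle of weight $<-k$. Both load-bearing assertions there are unsubstantiated. First, ``the optimum $2$-factor should be substantially negative'' is not a consequence of any soft argument: a uniformly random $2$-factor has expected weight $O(1)$, and the natural LP perturbation $y_e=\tfrac{2}{n-1}-\delta w_0(e)$ (which respects the degree constraints precisely because $w_0$ is degree-balanced) only yields a $2$-factor relaxation value of order $-\|w_0\|_2^2/(n\|w_0\|_\infty)$, which can be far smaller in magnitude than the required $\Omega(w_0[K_n]/n)$ when the weight is concentrated on few edges. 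In fact, ``large $L^1$-norm implies a very negative $2$-factor/Hamilton cycle'' is essentially equivalent to the inequality being proved (the composition of (2) and (3) in the paper), so it cannot be waved through. Second, the patching cost is not controlled by degree balance: a single merge deletes one edge from each of two cycles and inserts two new edges, and each inserted edge can individually have weight comparable to $\|w_0\|_\infty$ (degree balance bounds the \emph{sum} of weights at a vertex, not the weight of the particular edges into a possibly small target cycle), while each deleted edge can be very negative (e.g.\ a triangle of weight $-M$ edges sits in a $2$-factor but not in any Hamilton cycle). Since the number of merges is $\Theta(n)$ and the total gain you are defending is only $\Theta(w_0[K_n]/n)$, the overhead is of the same order as the gain and no amount of slack in the constants $4000$ and $5000(k+1)$ rescues an unquantified bound.

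The paper circumvents exactly this difficulty by introducing a third, locally checkable quantity, the total $4$-cycle imbalance $\|(n,w)\|_{4\text{-cyc}}$, and proving two separate inequalities: if it exceeds $kn^3$, a derandomised sequence of non-crossing $4$-cycle switches (Lemma~\ref{le:ave}, via Proposition~\ref{pr:cross} and Lemma~\ref{le:qh}) produces a Hamilton cycle beating the average by more than $k$; if it is at most $kn^3$, a greedy removal of unbalanced $4$-cycles plus a connected-component argument (Lemmas~\ref{le:S} and~\ref{le:reduce}) yields the equivalent weighting with $w^*[K_n]\le 4000kn$. If you want to salvage your route, you would need an independent, quantitative proof that degree balance plus large $L^1$-norm forces many ``profitable'' local exchanges --- which is precisely what the $4$-cycle imbalance measures --- so I would recommend adopting that certificate rather than the minimum $2$-factor.
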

Note that since $k$ is the parameter, it can be viewed as a constant, and so we may assume that $n\ge g(k)$ for any function $g$.

\medskip
\noindent
{\bf Related work}
The problem we consider in this paper falls into a class of problems introduced by Mahajan, Raman, and Sikdar \cite{MahRamSik}. The general framework is the following.
Consider a combinatorial optimisation problem in which one is
seeking a feasible solution of minimum (or maximum) \emph{value} and suppose further that one is always guaranteed to find a feasible solution whose \emph{value} is at most (or at least) some non-trivial (often tight) bound $b$ (e.g.\ in our case, for any instance of TSP, one can always find a Hamilton cycle of weight at most $dn$ where $d$ is the average weight of an edge of $K_n$). One can then consider the problem, parameterised by $k$, of determining whether there exists a feasible solution of value at most $b - k$ (or at least $b+k$). 
A variety of techniques combining tools from linear algebra, the probabilistic method, Harmonic analysis, combinatorics and graph theory have been applied to such problems; see \cite{GutYeo12} for a survey. Here, we mention progress on only a small sample of such problems.

For the Maximum $r$-Satisfiability Problem, given a multiset of $m$ clauses of size $r$, a straigtforward probabilistic argument shows that there exists a truth assignment satisfying at least $(2^r - 1)m/ 2^{r}$ clauses and this is tight. Alon et al.\ \cite{AGKSY} showed that one can decide in time $O(m) +  2^{O(k^2)}$ if there is a truth assignment satisfying at least $((2^r -1)m + k) / 2^r$ clauses,  where they used a combination of probabilistic, combinatorial and Harmonic analysis tools. 

For the Max-Cut problem, the Edwards-Erd{\"o}s bound \cite{Ed1, Ed2} states that every connected graph on $n$ vertices and $m$ edges has a cut of size at least $\frac{m}{2} + \frac{n-1}{4}$ and this is tight. Crowston, Jones, and Mnich \cite{CroJonMni} showed that it is fixed-parameter tractable to decide whether a given graph on $n$ vertices and $m$ edges has a cut of size at least $\frac{m}{2} + \frac{n-1}{4} +k$. This was later extended by Mnich et al. \cite{MniPhiSauSuc} to so-called \emph{$\lambda$-extendible properties}; as special cases of their result, they could extend the Max-Cut result above to the Max $q$-Colourable Subgraph problem and the Oriented Max Acyclic Digraph problem.

\bigskip
\noindent
{\bf Organisation} 
The rest of the paper is organised as follows. In the next section we set out the notation we use throughout. Section~\ref{se:overview} gives a brief discussion of some the ideas that are used to prove Theorem~\ref{th:main} and Theorem~\ref{th:main2}.
In Section~\ref{se:derandomisation}, we show how standard derandomisation techniques can be applied to the Travelling Salesman Problem in preparation for Sections~\ref{sec:stab} and~\ref{se:algorithms}. Section~\ref{sec:stab} is dedicated to the proof of Theorem~\ref{th:main2} and this is used in Section~\ref{se:algorithms} to prove Theorem~\ref{th:main}.


\section{Notation and Terminology}
\label{se:notation}

In this section, for convenience, we collect some notation and terminology (mostly standard) that we shall use throughout.

Let $G$ be a graph. The vertex set and edge set of $G$ are denoted by $V(G)$ and $E(G)$ respectively and we write $e(G)$ for the number of edges in $G$. A graph $F$ is a subgraph of $G$ written $F \subseteq G$ if $V(F) \subseteq V(G)$ and $E(F) \subseteq E(G)$. We say $F$ is a spanning subgraph of $G$ if $V(F) =V(G)$ and $E(F) \subseteq E(G)$. 

For $X \subseteq V(G)$, we write $X^{(2)}$ for the set of all edges $ab$ such that $a,b \in X$ and $a \not= b$.
 We write $G[X]$ for the graph induced by $G$ on $X$ and $G - X$ for the graph obtained from $G$ by deleting all vertices in $X$ i.e.\ $G-X := G[V(G) \setminus X]$. For $S \subseteq E(G)$, $G-S$ is the graph obtained from $G$ by deleting all the edges in $S$, i.e.\ $G-S := (V(G), E(G) \setminus S)$.
If $S \subset V(G)^{(2)}$ then we write $G \cup S := (V(G), E(G) \cup S)$ (and we write $G \cup e$ rather than $G \cup  \{e\}$ if $S=\{e\}$). 
 For disjoint subsets $A,B$  of $V(G)$, we write $G[A,B]$ for the graph with vertex set $A \cup B$ and edge set $\{e=ab \in E(G) \mid a \in A, b \in B\}$. For a vertex $v \in V(G)$, $N_G(v)$ denotes the set of neighbours of $v$ in $G$ and $d_v(G) := |N_G(v)|$ denotes the degree of $v$. The maximum and minimum degree of $G$ is denoted by $\delta(G)$ and $\Delta(G)$ respectively. 

A path $P=v_1v_2 \cdots v_k$ is the graph with vertices $v_1 \ldots, v_k$ and edges $v_iv_{i+1}$ for $i = 1, \ldots, k-2$.
For a path $P$, we sometimes write $v_1Pv_k$ for the same path to indicate that $v_1$ and $v_k$ are its end-vertices and we say $v_2, \ldots, v_{k-1}$ are the internal vertices of $P$. The notation extends in the natural way for concatenated paths so that if $v_1Pv_k$ and $w_1Qw_{\ell}$ are paths and $x_1, \ldots, x_t$ are vertices, then $v_1Pv_kx_1 \cdots x_tw_1Qw_{\ell}$ is the path
$v_1 \cdots v_kx_1 \cdots x_tw_1 \cdots w_{\ell}$. 
A cycle $C = v_1\cdots v_k v_1$ is the graph with vertices $v_1 \ldots, v_k$ and edges $v_1v_k$ and $v_iv_{i+1}$ for $i = 1, \ldots, k-2$. We call it a $k$-cycle if it has $k$ vertices. A cycle that is a spanning subgraph of a graph $G$ is called a Hamilton cycle of $G$. As before, we can write a cycle as a concatenation of paths.
 A matching of $G$ is a subgraph of $G$ of maximum degree $1$; a perfect matching of $G$ is a spanning matching of $G$. The complete graph on $n$ vertices is denoted by $K_n$.

Repeating notation from the introduction, recall that an instance of ${\rm TSP}_{\rm{BA}}$ consists of a triple $(n,w,k)$, where $n,k \in \mathbb{N}$ and $w: E(K_n) \rightarrow \mathbb{Z}$. We sometimes drop the parameter $k$ (when it is not relevant) and refer to instances $(n,w)$. 
For a subgraph $G$ of $K_n$, we write
\[
w(G):= \sum_{e \in E(G)} w(e)
\hspace{1 cm} \text{and} \hspace{1 cm}
w[G]:= \sum_{e \in E(G)} |w(e)| 
\]
and we define the {\it density} $d=d(n,w)$ of $(n,w)$ to be the average weight of an edge, i.e.\ $d:=w(K_n)/ \binom{n}{2}$.



\section{Overview}
\label{se:overview}

We remark at the outset that the discussion in this section is not required to understand the sections that follow; some definitions will be repeated later. 

\medskip
\noindent
{\bf Structural result} The key step for the algorithm of  Theorem~\ref{th:main} is the structural result, Theorem~\ref{th:main2}.   In order to explain the idea behind its proof, let us recast Theorem~\ref{th:main2} in the language of norms.


We say two instances $(n,w)$ and $(n,w')$ of ${\rm TSP}_{\rm{BA}}$ are equivalent, written $(n,w) \sim (n,w')$, if there exists some $\alpha \in \mathbb{Z}$ such that $w'(H) = w(H) + \alpha$ for every Hamilton cycle $H$ of $K_n$. 
We define 
\[
\|(n,w)\|_{1 / \sim}:= \min \{w'[K_n]: w' \sim w \}.
\]
For an instance $(n,w)$ of density $d$, if $k^*$ is such that $dn - k^*$ is the weight of a minimum weight Hamilton cycle of $K_n$, we define $\|(n,w)\|_{\rm HC}:=k^*$. 
Then the main substance of Theorem~\ref{th:main2} is that the following inequality holds: 
\begin{equation}
\label{eq:structure}
\|(n,w)\|_{1 / \sim} \leq 4000n\|(n,w)\|_{\rm HC}.
\end{equation}
This is proved by considering a third parameter $\|(n,w) \|_{4-{\rm cyc}}$ that is easily computed by examining the $4$-cycles of $K_n$. This parameter is introduced in Section~\ref{sec:stab}, and in the same section we prove the two inequalities
\begin{align}
\|(n,w)\|_{1 / \sim} &\leq \frac{4000}{n^2}\|(n,w) \|_{4-{\rm cyc}} \label{eq:s1} \\
\frac{1}{n^3}\|(n,w) \|_{4-{\rm cyc}} &\leq \|(n,w)\|_{\rm HC},
\label{eq:s2}
\end{align} 
which together prove (\ref{eq:structure}).

We make some further remarks. Note that for fixed $n$, the set of instances $(n,w)$ with the obvious notions of addition and scalar multiplication is the vector space $\mathbb{R}^{E(K_n)}$. One can show that $\sim$ is an equivalence relation and the equivalence classes are translates of $Z$, the set of instances equivalent to the all-zero weighting, which turns out to be an $(n-1)$-dimensional subspace of $\mathbb{R}^{E(K_n)}$.
Furthermore, each of $\|\cdot\|_{1 / \sim}$, $\|\cdot \|_{4-{\rm cyc}}$, and  $\|\cdot\|_{\rm HC}$ are pseudo-norms on $\mathbb{R}^{E(K_n)}$ and norms on the quotient space $\mathbb{R}^{E(K_n)} / Z$. In particular, $Z$ is precisely the set of instances $(n,w)$ in which all Hamilton cycles have the same weight $dn$, where $d$ is the density of $(n,w)$. 

\medskip
\noindent
{\bf Algorithmic result} Once we have established the structural result Theorem~\ref{th:main2}, we can construct the algorithm of Theorem~\ref{th:main}. Given $(n,w,k)$, if Theorem~\ref{th:main2} does not already give us the desired Hamilton cycle of weight at most $dn-k$, then we can find an equivalent instance $(n,w',k)$ with $w'[K_n] \leq 4000kn$. If we can find a large matching in the graph of edges assigned a negative weight by $w'$ then we can extend it (in a random way) to a desired Hamilton cycle of low weight. If such a matching does not exist, then one can easily conclude that all edges asigned a negative weight by $w'$ are incident with only a small set (of size depending on $k$ but independent of $n$) of vertices. It turns out that, with this additional structure, one can in fact find the minimum weight Hamilton cycle for $(n,w')$ (and hence the minimum weight Hamilton cycle for $(n,w)$) in FPT-time i.e.\ time $f(k)n^{O(1)}$, where $f$ is a function of $k$ only.  

\section{Derandomisation}
\label{se:derandomisation}

In this section we present some standard derandomisation arguments that we shall require later. First some notation.

We denote by $\mathcal{H}_n$ the set of all Hamilton cycles of the complete graph $K_n$.
For any graph (or set of edges)  $G$, let $\mathcal{H}_n^G:= \{H \in \mathcal{H}_n \mid G \subseteq H\}$. 
In general we shall denote by $\tilde{H}$ a uniformly random element of $\mathcal{H}_n$, and by $\tilde{H}^G$ a uniformly random element of $\mathcal{H}_n^G$.

 We say a graph $G \subseteq K_n$ is a {\em partial Hamilton cycle} of $K_n$ if $G$ is a spanning subgraph of some $H \in \mathcal{H}_n$; thus $G$ is either a Hamilton cycle or the union of vertex disjoint paths (where we allow a path to be a singleton vertex). A path consisting of a single vertex is called a {\em trivial path}, and a path on several vertices is called a {\em non-trivial path}. We shall use the following simple fact several times: if $G$ is a partial Hamilton cycle with $r$ non-trivial paths and $s$ trivial paths then
 \begin{equation}
 \label{eq:Hn}
 |\mathcal{H}_n^G| = 2^{r-1}(r+s-1)!. 
 \end{equation}

For $G$ a partial Hamilton cycle of $K_n$, we denote by $J(G)$ the set of edges in $K_n$ which join two paths of $G$ together into a single path. If $G$ is a Hamilton path, $J(G)$ is defined to be the unique edge between the two ends of the path.

\begin{lemma}
\label{le:derand1}
Suppose we have a function $X: \mathcal{H}_n \rightarrow \mathbb{Q}$ and for every partial Hamilton cycle $G$ of $K_n$, assume we can compute $\mathbb{E}(X(\tilde{H}^G))$ in time $f(n)$. Then for any given partial Hamilton cycle $G^*$, we can find in time $O(n^3f(n))$ a Hamilton cycle $H^* \in \mathcal{H}_n^{G^*}$ such that $X(H^*) \leq \mathbb{E}(X(\tilde{H}^{G^*}))$.
\end{lemma}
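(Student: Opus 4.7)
The plan is to apply the method of conditional expectations. Starting from $G_0 := G^*$, I would iteratively build a chain of partial Hamilton cycles $G_0 \subsetneq G_1 \subsetneq \cdots \subsetneq G_m = H^*$, where at each step $G_{i+1} = G_i \cup e_i$ for a single edge $e_i \in J(G_i)$, chosen to satisfy
\[
\mathbb{E}(X(\tilde{H}^{G_{i+1}})) \;\le\; \mathbb{E}(X(\tilde{H}^{G_i})).
\]
Since adding an edge of $J(G_i)$ either merges two paths into one or closes the sole remaining Hamilton path into a Hamilton cycle, the process terminates in at most $m \le n$ steps with $G_m \in \mathcal{H}_n$. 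Because $\mathcal{H}_n^{G_m} = \{G_m\}$, this gives $X(H^*) = \mathbb{E}(X(\tilde{H}^{H^*})) \le \mathbb{E}(X(\tilde{H}^{G^*}))$, as required.

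The crux is to show that at each step such an $e_i$ exists. If $G_i$ has $r$ non-trivial paths and $s$ trivial paths, then every $H \in \mathcal{H}_n^{G_i}$ contains exactly $r + s$ edges of $J(G_i)$ (namely the $r + s$ edges of $H \setminus G_i$). Double-counting and exchanging sum and expectation then gives
\[
(r+s)\,\mathbb{E}(X(\tilde{H}^{G_i})) \;=\; \mathbb{E}\!\left[ X(\tilde{H}^{G_i}) \sum_{e \in J(G_i)} \mathbf{1}[e \in \tilde{H}^{G_i}] \right] \;=\; \sum_{e \in J(G_i)} \Pr(e \in \tilde{H}^{G_i})\, \mathbb{E}(X(\tilde{H}^{G_i \cup e})),
\]
where the final equality uses that the distribution of $\tilde{H}^{G_i}$ conditional on $e \in \tilde{H}^{G_i}$ is precisely $\tilde{H}^{G_i \cup e}$. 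So $\mathbb{E}(X(\tilde{H}^{G_i}))$ is a convex combination of $\{\mathbb{E}(X(\tilde{H}^{G_i \cup e}))\}_{e \in J(G_i)}$, and hence some $e \in J(G_i)$ achieves $\mathbb{E}(X(\tilde{H}^{G_i \cup e})) \le \mathbb{E}(X(\tilde{H}^{G_i}))$.

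The algorithm then brute-forces this choice: at each of the at most $n$ steps, iterate over all $|J(G_i)| = O(n^2)$ candidate edges, compute $\mathbb{E}(X(\tilde{H}^{G_i \cup e}))$ in time $f(n)$ by the hypothesis, and take the $e$ minimising this conditional expectation. This yields a total running time of $O(n \cdot n^2 \cdot f(n)) = O(n^3 f(n))$, matching the claim.

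I do not expect any serious obstacle: this is a textbook derandomisation via conditional expectations, with the only mild subtlety being the double-counting identity that casts $\mathbb{E}(X(\tilde{H}^{G_i}))$ as a weighted average of one-edge-extended expectations (and the mild edge case where $G_i$ is already a Hamilton path, in which $|J(G_i)| = 1$ and the inequality becomes an equality). The genuine work in the paper will lie not in this lemma but in the later sections, which must verify the hypothesis---efficient computation of $\mathbb{E}(X(\tilde{H}^G))$---for the specific $X$ arising in the TSP analysis.
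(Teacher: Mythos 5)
Your proof is correct and follows essentially the same route as the paper's: greedy edge-by-edge extension via the method of conditional expectations, with at most $n$ iterations, $O(n^2)$ candidate edges of $J(G_i)$ per iteration, and one $f(n)$-time evaluation per candidate, giving $O(n^3 f(n))$. In fact your normalised identity $(r+s)\,\mathbb{E}(X(\tilde{H}^{G_i})) = \sum_{e \in J(G_i)} \Pr(e \in \tilde{H}^{G_i})\,\mathbb{E}(X(\tilde{H}^{G_i \cup e}))$ is the correct form of the averaging step --- the paper writes this as a bare ``law of total expectation'' without the factor $r+s$, which is only literally accurate when $G_i$ is already a Hamilton path --- but the existence of an edge with $\mathbb{E}(X(\tilde{H}^{G_i \cup e})) \le \mathbb{E}(X(\tilde{H}^{G_i}))$ follows either way, so both arguments are sound.
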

\begin{proof}
From the law of total expectation, we have
\begin{align*}
\mathbb{E}(X(\tilde{H}^G)) 
&= \sum_{e \in J(G)} \mathbb{P}(e \in \tilde{H}^G)
\;\mathbb{E}(X(\tilde{H}^{G \cup e})) 
\end{align*}
and so we know there exists some $e \in J(G)$ such that $\mathbb{E}(X(\tilde{H}^{G \cup e})) \leq \mathbb{E}(X(\tilde{H}^{G}))$.

We construct $H^*$ by adding edges one at a time to $G^*$ as follows. Assume $G^*$ has $q$ edges for some $q \geq 0$ and set $G_q:=G^*$. Assume we have constructed a partial Hamilton cycle $G_{q'} \supseteq G_q$ with $q' \geq q$ edges satisfying
$\mathbb{E}(X(\tilde{H}^{G_{q'}})) \leq \mathbb{E}(X(\tilde{H}^{G_q}))$. For each $e \in J(G_{q'})$ we compute $\mathbb{E}(X(\tilde{H}^{G_{q'} \cup e}))$ and determine an edge $e^*$ for which 
$\mathbb{E}(X(\tilde{H}^{G_{q'} \cup e^*}))
\leq 
\mathbb{E}(X(\tilde{H}^{G_{q'}}))$. This can be done in time $O(f(n)n^2)$. We set $G_{q'+1}:= G_{q'} \cup e^*$. After at most $n$ iterations of this process, we obtain a Hamilton cycle $H^*$ satisfying the desired condition. The running time is therefore bounded by $O(f(n)n^3)$.
\qed
\end{proof}

\begin{lemma}
\label{le:derand2}
Given any instance $(n,w)$ and any partial Hamilton cycle $G$ of $K_n$, we can find in time $O(n^5)$ a Hamilton cycle $H^* \in \mathcal{H}_n^G$ such that $w(H^*) \leq \mathbb{E}(w(\tilde{H}^G))$.
\end{lemma}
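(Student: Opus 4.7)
The plan is to apply Lemma~\ref{le:derand1} with $X(H) := w(H)$. For this, it suffices to show that for any partial Hamilton cycle $G$ of $K_n$, the expectation $\mathbb{E}(w(\tilde{H}^G))$ can be computed in time $f(n) = O(n^2)$; the overall bound $O(n^3 f(n)) = O(n^5)$ then follows from Lemma~\ref{le:derand1}.

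By linearity of expectation,
\[
\mathbb{E}(w(\tilde{H}^G)) = \sum_{e \in E(K_n)} w(e) \, \mathbb{P}(e \in \tilde{H}^G),
\]
so the task reduces to computing $\mathbb{P}(e \in \tilde{H}^G)$ for each of the $\binom{n}{2}$ edges of $K_n$. I would classify each edge $e$ according to how it meets $G$: (i) if $e \in E(G)$, the probability is $1$; (ii) if $e$ is incident with an internal vertex of some non-trivial path of $G$, or if $e$ closes a short cycle by joining the two endpoints of the same non-trivial path while other components remain, then $e$ cannot lie in any $H \in \mathcal{H}_n^G$ and the probability is $0$; (iii) in the remaining cases, $G \cup e$ is itself a partial Hamilton cycle, and
\[
\mathbb{P}(e \in \tilde{H}^G) = \frac{|\mathcal{H}_n^{G \cup e}|}{|\mathcal{H}_n^G|},
\]
which I would evaluate using the formula (\ref{eq:Hn}).

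Writing $(r,s)$ for the number of non-trivial and trivial paths of $G$, so that $|\mathcal{H}_n^G| = 2^{r-1}(r+s-1)!$, the three sub-cases of (iii) give closed-form answers depending only on $r+s$: adding an edge between two trivial paths yields $(r+1,s-2)$ and probability $2/(r+s-1)$; adding one between a trivial path and a non-trivial endpoint yields $(r,s-1)$ and probability $1/(r+s-1)$; and adding one between two distinct non-trivial endpoints yields $(r-1,s)$ and probability $1/(2(r+s-1))$. Each probability is thus $O(1)$-computable after a single $O(n)$ preprocessing pass that records $r$, $s$, and the endpoint/internal status of each vertex, so all $O(n^2)$ edge contributions can be summed in time $O(n^2)$, as required.

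The only real subtlety is the bookkeeping in the degenerate regime where $G$ is already a Hamilton path or a Hamilton cycle (so $|\mathcal{H}_n^G|=1$ and the formulas above need to be read carefully), and in correctly excluding edges in case (ii) that would produce a non-spanning cycle; these are routine but must be handled in order that the computed conditional expectations are consistent with the ones $\mathbb{E}(X(\tilde{H}^{G \cup e}))$ used inside Lemma~\ref{le:derand1}'s recursive step. With this in place, the lemma follows directly.
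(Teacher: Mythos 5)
Your proposal is correct and follows essentially the same route as the paper: apply Lemma~\ref{le:derand1} with $X=w$, reduce the computation of $\mathbb{E}(w(\tilde{H}^G))$ to computing $\mathbb{P}(e\in\tilde{H}^G)$ edge by edge via (\ref{eq:Hn}), and note this takes $O(n^2)$ time, giving $O(n^5)$ overall. The paper phrases the sum over the set $J(G)$ of path-joining edges (plus the contribution $w(G)$ from edges already present) and records the probability compactly as $2^{r'-r}/(r+s-1)$, which is exactly your three sub-cases; the difference is purely presentational.
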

\begin{proof}
Apply the previous lemma. We can compute $\mathbb{E}(w(\tilde{H}^G))$ in time $O(n^2)$. Indeed, note that
\[
\mathbb{E}(w(\tilde{H}^G)) = w(G) + \sum_{e \in J(G)}\mathbb{P}(e \in \tilde{H}^G)w(e)
\]
and using (\ref{eq:Hn}),
\begin{equation*}
\mathbb{P}(e \in \tilde{H}^G) 
= \frac{|\mathcal{H}_n^{G \cup e}|}{|\mathcal{H}_n^G|}
= \frac{2^{r'-1}(r'+s'-1)!}{2^{r-1}(r+s-1)!},
\end{equation*}	
where $r,s$ are the numbers of non-trivial and trivial paths in $G$ and $r',s'$ are the numbers of non-trival and trivial paths in $G \cup e$. In fact it is not hard to see that if $e \in J(G)$, then $r+s = r'+s'+1$, and so
\begin{equation}
\label{eq:prob}
\mathbb{P}(e \in \tilde{H}^G) = \frac{2^{r'-r}}{r+s-1}.
\end{equation}
Thus since $|J(G)| = O(n^2)$, we can compute $\mathbb{E}(w(\tilde{H}^G))$ in time $O(n^2)$ as required.
\qed
\end{proof}

\section{The structural result}
\label{sec:stab}
Our aim in this section is to prove Theorem~\ref{th:main2}. 
In this section, an instance refers to a pair $(n,w)$ where $w:E(K_n) \rightarrow \mathbb{Z}$. 
Let us denote the set of $4$-cycles of $K_n$ by $\mathcal{C}_n$.
Given an instance $(n,w)$ and a $4$-cycle $C = v_1v_2v_3v_4v_1 \in \mathcal{C}_n$, we define the \emph{balance} of $C$ (with respect to $(n,w)$) to be
\[
\bal_{(n,w)}(C) = \bal(C):= |w(v_1v_2) + w(v_3v_4) - w(v_1v_3) - w(v_2v_4)|.
\]
We say that $C$ is {\em balanced} if $\bal(C)=0$; otherwise we say $C$ is {\em unbalanced}. 
For a set $\mathcal{A} \subseteq \mathcal{C}_n$, we define
\[
\bal(\mathcal{A}) := \sum_{C \in \mathcal{A}} \bal(C),
\]
and we set 
\[
\|(n,w)\|_{4-\rm{cyc}} := \bal(\mathcal{C}_n) = \sum_{C \in \mathcal{C}_n} \bal(C).
\]
Our first lemma gives a polynomial-time witness to the inequality (\ref{eq:s2}).

\begin{lemma}
\label{le:ave}
For a given instance $(n,w)$ and $k \in \mathbb{Q}$, suppose $\|(n,w)\|_{4-\rm{cyc}} \geq kn^3$. Then there exists a Hamilton cycle $H$ satisfying $w(H) < dn -k$. Furthermore, we can find such a Hamilton cycle in time $O(n^7)$.
\end{lemma}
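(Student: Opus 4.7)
Plan: The plan is to combine a symmetry analysis of conditional expectations with Lemma~\ref{le:derand2}. The goal is to locate a small partial Hamilton cycle $G^*$ with $\mathbb{E}[w(\tilde H^{G^*})] < dn - k$; Lemma~\ref{le:derand2} then produces a Hamilton cycle $H^* \in \mathcal{H}_n^{G^*}$ with $w(H^*) \leq \mathbb{E}[w(\tilde H^{G^*})] < dn - k$ in $O(n^5)$ time.

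The probabilistic core is a symmetry computation for each $4$-cycle $C = v_1v_2v_3v_4 \in \mathcal{C}_n$. Writing $M_C^+ = \{v_1v_2, v_3v_4\}$ and $M_C^- = \{v_1v_3, v_2v_4\}$, the transposition $\sigma = (v_2\, v_3)$ acts on $K_n$ as an automorphism that maps $M_C^+$ to $M_C^-$ and fixes every edge not incident to $\{v_2, v_3\}$. Pushing $\sigma$ through the probability formula~(\ref{eq:prob}), contributions to $\mathbb{E}[w(\tilde H^{M_C^+})] - \mathbb{E}[w(\tilde H^{M_C^-})]$ from all edges except those in $V(C)^{(2)}$ cancel exactly, yielding the identity
\[
\mathbb{E}[w(\tilde H^{M_C^+})] - \mathbb{E}[w(\tilde H^{M_C^-})] = \Bigl(1 - \tfrac{1}{2(n-3)}\Bigr)\bigl(w(M_C^+) - w(M_C^-)\bigr),
\]
so the gap between these two conditional expectations is $(1-o(1))\bal(C)$ (and in particular the cheaper matching of $C$ beats the average of the two by $\tfrac{1}{2}(1-o(1))\bal(C)$).

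The algorithm enumerates all $O(n^4)$ disjoint-edge pairs $M$, computing $\mu_M := \mathbb{E}[w(\tilde H^M)]$ in $O(n^2)$ per pair via~(\ref{eq:prob}) (total $O(n^6)$ time), selects the minimiser $M^*$, and applies Lemma~\ref{le:derand2} starting from $G^* = M^*$ to produce $H^* \in \mathcal{H}_n^{M^*}$ with $w(H^*) \leq \mu_{M^*}$ in an additional $O(n^5)$ time. The slack up to $O(n^7)$ accommodates an iterated refinement of $G^*$ — for instance, greedily growing $M^*$ by adding further favourable edges — or a preliminary sweep over $4$-cycles of largest balance to seed the search.

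The hard part is proving that $\min_M \mu_M < dn - k$ under the assumption $\|(n,w)\|_{4-\rm{cyc}} \geq kn^3$. The identity above shows each $4$-cycle contributes a per-pair gap of order $\bal(C)$; aggregating via Cauchy--Schwarz over the $O(n^4)$ matchings gives a variance estimate of the form
\[
\mathrm{Var}_M(\mu_M) \;\geq\; c\,\|(n,w)\|_{4-\rm{cyc}}^2 / n^{4}.
\]
Turning this variance bound into a one-sided tail bound of the required strength, $\min_M \mu_M \leq dn - k$, is the main obstacle: a single $4$-cycle only saves $O(\bal(C))$ against a possibly drifting baseline, so one must either invoke higher-moment control on the distribution of $\mu_M$ or — more natural here — iterate the conditioning along a greedy packing of vertex-disjoint $4$-sets chosen to have large local imbalance $\max_i w(M_i^{(S)}) - \min_i w(M_i^{(S)})$. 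The latter approach also explains the extra factor of $n$ in the $O(n^7)$ running time, since one runs an $O(n^5)$ derandomisation on the aggregated partial Hamilton cycle after an $O(n^6)$ sweep that builds it up.
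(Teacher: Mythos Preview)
Your identity relating $\mathbb{E}[w(\tilde H^{M_C^+})]-\mathbb{E}[w(\tilde H^{M_C^-})]$ to $\bal(C)$ is correct, but the proof does not close. You yourself flag the gap: knowing that the two conditional expectations attached to each $4$-cycle differ by roughly $\bal(C)$ gives no one-sided control on $\min_M\mu_M-dn$, and neither a Cauchy--Schwarz variance estimate nor a vague ``greedy packing of $4$-sets'' repairs this. Concretely, take $w(e)\in\{-1,+1\}$ independently uniform. Then $\|(n,w)\|_{4\text{-cyc}}=\Theta(n^4)$, so the hypothesis holds with $k=\Theta(n)$; yet for \emph{every} two-edge matching $M$ one has
\[
\mu_M-dn=\sum_{e}\Bigl(\mathbb{P}(e\in\tilde H^M)-\tfrac{2}{n-1}\Bigr)w(e),
\]
and since the bracketed coefficients are $1-O(1/n)$ on the two edges of $M$ and $O(1/n)$ on every other edge, $\mu_M-dn$ lies in $[-2-o(1),\,2+o(1)]$ with high probability. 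Thus $\min_M\mu_M\ge dn-O(1)$, nowhere near $dn-k$ with $k=\Theta(n)$. Conditioning on any \emph{bounded} number of edges suffers the same obstruction, and your packing suggestion, as written, only guarantees that the chosen matchings beat their complements inside each $4$-set, not that they beat the global average $d$.

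The paper avoids this difficulty by \emph{not} trying to make $\mathbb{E}[w(\tilde H^{G})]$ small for some small seed $G$. Instead it introduces the potential $q(H)=\sum_{C\in\mathcal{C}^*_n(H)}\bal(C)$, the total balance of $4$-cycles heavily and correctly embedded in $H$, and shows $\mathbb{E}[q(\tilde H)]\ge \tfrac{2}{n^2}\|(n,w)\|_{4\text{-cyc}}>2kn$. Hence the \emph{combined} functional $X(H)=w(H)-q(H)/(2n)$ has $\mathbb{E}[X(\tilde H)]<dn-k$. One then derandomises $X$ via Lemma~\ref{le:derand1} (this is where the $O(n^7)$ comes from, since $\mathbb{E}[q(\tilde H^G)]$ costs $O(n^4)$ per evaluation), obtaining some $H^*$ with $w(H^*)-q(H^*)/(2n)<dn-k$. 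Finally Lemma~\ref{le:qh} cashes in the potential: it performs a single round of non-crossing $2$-opt swaps along the heavily embedded $4$-cycles of $H^*$ to produce $H$ with $w(H)\le w(H^*)-q(H^*)/(2n)<dn-k$. The point is that the saving is realised \emph{after} derandomisation by local moves, not by clever conditioning beforehand.
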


Before we can prove this lemma, we require several preliminary results.
Suppose $H = v_1v_2 \cdots v_nv_1$ is a Hamilton cycle of $K_n$. Consider two edges $e_1=v_av_b$ and $e_2=v_xv_y$ of $K_n$, where without loss of generality $a<b$, $x<y$, and $a<x$. We say $e_1$ and $e_2$ are {\em crossing (relative to $H$)} if $a<x<b<y$; otherwise we say $e_1$ and $e_2$ are non-crossing. This is just the natural notion of crossing in a planar drawing of $H$, $e_1$, and $e_2$.

\begin{proposition}
\label{pr:cross}
Suppose $H=v_1v_2 \cdots v_nv_1$ is a Hamilton cycle of $K_n$ and suppose we have a function $t: E(K_n) \rightarrow \{0,1,2,\ldots\}$. 
Then we can find $T^* \subseteq E(K_n)$ such that every pair of edges in $T^*$ is non-crossing and $t(T^*)\ge t(K_n)/(2n)$.
Furthermore, we can find $T^*$ in time $O(n^2)$.
\end{proposition}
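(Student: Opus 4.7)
The plan is to use a simple averaging argument based on the observation that every \emph{star} (the set of all edges incident to a fixed vertex) is pairwise non-crossing with respect to $H$. Indeed, if two edges $e_1,e_2$ share an endpoint $v_a$, then one of the four endpoint-indices coincides, so the \emph{strict} chain $a < x < b < y$ required by the definition of crossing cannot hold; a two-case check (shared vertex is the smaller endpoint in both edges, or the larger endpoint in both edges) covers every labelling that can arise under the ``WLOG'' ordering in the definition.

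With that observation in hand, I would simply take $T^*$ to be the heaviest star. Formally, for $i=1,\dots,n$ let $S_i := \{v_iv_j : j \neq i\}$. Then $t(S_i)=\sum_{j\neq i} t(v_iv_j)$ and, since every edge contributes to exactly two stars,
\[
\sum_{i=1}^n t(S_i) = 2t(K_n).
\]
By averaging there is some $i^*$ with $t(S_{i^*}) \geq 2t(K_n)/n$, which is comfortably above the required threshold $t(K_n)/(2n)$. Setting $T^*:=S_{i^*}$ gives a pairwise non-crossing set of the desired total weight.

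Algorithmically, one computes each $t(S_i)$ in $O(n)$ time, chooses the index $i^*$ achieving the maximum, and outputs $T^*$; the total running time is $O(n^2)$, matching the claim. There is no real obstacle to overcome here: the only thing to be careful about is confirming that shared-endpoint edges are non-crossing under the given strict definition, which is immediate once one unpacks the ``WLOG $a<b$, $x<y$, $a<x$'' convention.
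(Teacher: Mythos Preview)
Your argument is correct, and it takes a genuinely different route from the paper's. The paper partitions $E(K_n)$ into the $2n-2$ ``parallel chord'' classes $E_q=\{v_iv_j:i+j=q\}$ for $q=3,\dots,2n$; each $E_q$ is pairwise non-crossing (indices with equal sum give nested chords), and averaging over the partition yields some $E_{q^*}$ with $t(E_{q^*})\ge t(K_n)/(2n-2)\ge t(K_n)/(2n)$. You instead use the $n$ stars $S_i$, which form a $2$-cover rather than a partition; edges sharing an endpoint cannot satisfy the strict chain $a<x<b<y$, so each star is non-crossing, and averaging gives $t(S_{i^*})\ge 2t(K_n)/n$. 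Your decomposition is just as elementary, gives a bound four times stronger than what the proposition asks for, and the $O(n^2)$ running time is identical; the paper's choice buys nothing extra for the downstream application in Lemma~\ref{le:qh}, where the proposition is used as a black box.
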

\begin{proof}
Let $Q= \{3,4, \ldots, 2n \}$, and for each $q \in Q$, let $E_q = \{v_iv_j \in E(K_n) \mid  i+j=q\}$. 
Observe that the sets $(E_q)_{q \in Q}$ partition $E(K_n)$ and that for each 
fixed $q \in Q$ each pair of edges in $E_q$ is non-crossing. 
Thus since 
\[
t(K_n) = \sum_{q \in Q} t(E_q),
\]
there is some $q^* \in Q$ for which $t(E_{q^*}) \geq t(K_n)/|Q| \geq t(K_n)/2n$. Set $T^* =  E_{q^*}$.

We can determine $E_q$ and $t(E_q)$ in time $O(n^2)$ and so we can determine $T^*$ is time $O(n^2)$.
\qed
\end{proof}

We introduce some more terminology. Let $(n,w)$ be an instance and let $H$ be a Hamilton cycle of $K_n$. We say a $4$-cycle $C = v_1v_2v_3v_4v_1$ of $K_n$ is {\em embedded} in $H$ if either $v_1v_2, v_3v_4 \in E(H)$ or $v_2v_3, v_4v_1 \in E(H)$; note that if three edges of $C$ are in $H$, then $C$ is necessarily embedded in $H$. 
We say $C$ is {\em correctly embedded} in $H$ if $E(H) \triangle E(C)$ forms a Hamilton cycle; note that this happens if and only if $C$ is embedded in $H$, exactly two edges of $C$ belong to $H$, and the other two edges of $C$ are crossing relative to $H$. If $C$ is an unbalanced $4$-cycle and is embedded in $H$, we say it is {\em heavily embedded in $H$} if the edges of the heavier perfect matching of $C$ appear in $H$. We write $\mathcal{C}^*_n(H)$ for the set of $4$-cycles heavily and correctly embedded in $H$.
We define
\[
q(H):= \sum_{C \in \mathcal{C}^*_n(H)} \bal(C). 
\] 
The next lemma shows how a Hamilton cycle $H$ can be locally improved using $4$-cycles heavily and correctly embedded in it.

\begin{lemma}
\label{le:qh}
Let $(n,w)$ be an instance.
Given a Hamilton cycle $H$ of $K_n$, there exists  another Hamilton cycle $H'$ such that $w(H') \leq w(H) - (q(H)/2n)$, and we can find in it time $O(n^4)$. 
\end{lemma}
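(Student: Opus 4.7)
The plan is to select a family $S \subseteq \mathcal{C}_n^*(H)$ of heavily-correctly-embedded 4-cycles with $\sum_{C \in S} \bal(C) \geq q(H)/(2n)$, and to define $H'$ as the graph obtained from $H$ by removing, for each $C \in S$, the two $H$-edges of $C$ and adding its two chords, so that each simultaneous swap contributes $-\bal(C)$ to the total weight. The core challenge is twofold: locating such a set $S$ quickly using Proposition~\ref{pr:cross}, and then certifying that the resulting graph really is a single Hamilton cycle rather than a disjoint union of cycles.

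For the selection step, write $H = u_1 u_2 \cdots u_n u_1$ and, for each $C \in \mathcal{C}_n^*(H)$ with $H$-edges $u_a u_{a+1}$ and $u_b u_{b+1}$ (taking $1 \leq a < b \leq n$ mod $n$), define the \emph{signature} of $C$ to be the chord $e(C) := u_a u_b \in E(K_n)$. The map $C \mapsto e(C)$ is injective, since from $e(C) = u_a u_b$ one recovers the $H$-edge pair $\{u_a u_{a+1}, u_b u_{b+1}\}$, and then the correct and heavy embedding of $C$ forces the remaining structure. Setting $t(e) = \bal(C)$ when $e = e(C)$ for some $C \in \mathcal{C}_n^*(H)$, and $t(e) = 0$ otherwise, gives $t(K_n) = q(H)$. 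Proposition~\ref{pr:cross} then yields, in $O(n^2)$ time, a set $T^* = E_{q^*}$ of pairwise non-crossing edges with $t(T^*) \geq q(H)/(2n)$. All edges in $E_{q^*}$ share the common index-sum $q^*$, so they are in fact pairwise \emph{nested} with no shared endpoints. Taking $S := \{C \in \mathcal{C}_n^*(H) : e(C) \in T^*\}$, the signature translation converts this nested structure on chords into a nested structure on the associated $H$-edge pairs, so any two 4-cycles in $S$ have disjoint $H$-edges and properly nested position intervals.

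The main obstacle is proving $H'$ is a single Hamilton cycle, which I intend to handle by induction on $|S|$. I choose $C \in S$ to be \emph{innermost} (largest $a$, smallest $b$); swapping $C$ alone gives a Hamilton cycle $H_1$ by the correct embedding of $C$, and $H_1$ differs from $H$ only in that the segment $u_{a+1} u_{a+2} \cdots u_b$ is reversed while the rest of the cyclic order is untouched. Since every other $C' \in S$ strictly encloses $C$, all four vertices of $C'$ lie outside this reversed segment, so $C'$ remains heavily and correctly embedded in $H_1$ with the same relative crossing pattern of its chords, and $S \setminus \{C\}$ is still a pairwise nested family. Applying the inductive hypothesis to $H_1$ and $S \setminus \{C\}$ then delivers the desired Hamilton cycle $H'$. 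Summing weight contributions gives $w(H') = w(H) - \sum_{C \in S} \bal(C) \leq w(H) - q(H)/(2n)$, as required. Enumerating the $O(n^2)$ candidate 4-cycles to build $t$, applying Proposition~\ref{pr:cross}, and performing the local edge modifications for each $C \in S$ all fit comfortably inside the claimed $O(n^4)$ time budget.
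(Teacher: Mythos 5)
Your proof is correct and follows essentially the same route as the paper: both encode each heavily and correctly embedded $4$-cycle by a single edge (you use one of its chords in $K_n$, the paper an edge of an auxiliary complete graph on $E(H)$), apply Proposition~\ref{pr:cross} to extract a pairwise non-crossing subfamily carrying at least $q(H)/2n$ of the total balance, and then perform the corresponding swaps one at a time. The only difference is that you supply the innermost-first induction verifying that the combined symmetric difference is a single Hamilton cycle, a step the paper dismisses with ``one can check''.
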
 
\begin{proof}
Let $e_1, \ldots, e_n$ be the edges of $H$ in order. Suppose $C_1$ and $C_2$ are $4$-cycles that are correctly embedded in $H$, where $e_a,e_b$ with $a<b$ are the edges of $C_1$ in $H$ and $e_x,e_y$ with $x<y$ are the edges of $C_2$ in $H$, and without loss of generality, assume $a<x$. 
 We say that $C_1$ and $C_2$ are {\em crossing (relative to $H$)} if $a<x<b<y$.
 If $C_1$ and $C_2$ are not crossing relative to $H$, we find that $E(H) \triangle E(C_1) \triangle E(C_2)$ forms a Hamilton cycle. More generally, one can check that
if $C_1, \ldots, C_r$ are all $4$-cycles correctly embedded in $H$ and no pair of these $4$-cycles are crossing relative to $H$, then $E(H) \triangle E(C_1) \triangle \cdots \triangle E(C_r)$ forms a Hamilton cycle.

Consider an auxiliary complete graph $K^{\circ}_n$ with vertex set $e_1, \ldots, e_n$ and an auxiliary Hamilton cycle $H^{\circ}=e_1e_2 \cdots e_ne_1$. For each edge $e_xe_y$ of $E(K^{\circ}_n) \setminus E(H^{\circ})$ let $C_{xy}$ be the unique $4$-cycle of $K_n$ that is correctly embedded in $H$ and satisfies $E(H) \cap E(C) = \{e_x,e_y\}$. This correspondence is clearly bijective and crossing edges of $K^{\circ}_n$ (relative to $H^{\circ}$) correspond to crossing $4$-cycles of $K_n$ (relative to $H$). 
Define $t:E(K^{\circ}_n) \rightarrow \{0,1,\ldots\}$ by
\begin{align*}
t(e_xe_y) = 
\begin{cases}
\bal(C_{xy}) &\text{if } C_{xy} \text{ is heavily embedded in } H; \\
0 &\text{otherwise},
\end{cases}
\end{align*}
and note $t(K^{\circ}_n)=q(H)$.
Applying Proposition~\ref{pr:cross}, we can find a set of edges $T^* \subseteq E(K_n^{\circ})$ which are pairwise non-crossing relative to $H^{\circ}$ and satisfying $t(T^*) > q(H)/2n$. 
By removing from $T^*$ any edges assigned weight $0$ by $t$, we may further assume that every edge in $T^*$ is assigned a positive weight by $t$.
The edges of $T^*$ then correspond to $4$-cycles $C_1, \ldots, C_r$ of $K_n$ that are heavily and correctly embedded in $H$ such that no pair of these $4$-cycles are crossing (relative to $H$) and where $\bal(C_1) + \cdots + \bal(C_r) \geq q(H)/2n$. The time needed to find $C_1, \ldots, C_r$ is $O(n^2)$.

 We construct a sequence of Hamilton cycles $H_0, \ldots, H_r$, where $H_0=H$ and $E(H_i) = E(H_{i-1}) \triangle E(C_i)$ (which takes time $O(r)=O(n)$). 
 We have $w(H_i) = w(H_{i-1})- \bal(C_i)$ since $C_i$ is heavily embedded in $H_{i-1}$. Hence setting $H'=H_r$, we have \[
 w(H') = w(H) - \bal(C_1) - \cdots - \bal(C_r) \leq w(H) - (q(H)/2n),
\]  
 and the time required to construct $H'$ is $O(n^2)$. 
\qed
\end{proof}

The previous lemma motivates our interest in the function $q: \mathcal{H}_n \rightarrow \mathbb{Z}$. We shall later apply Lemma~\ref{le:derand1} to $q$ and for this we require the following straightforward proposition. We spell out the details for completeness.

\begin{proposition}
\label{pr:Eqh}
Let $(n,w)$ be an instance and let $G$ be a partial Hamilton cycle of $K_n$. Then we can compute $\mathbb{E}(q(\tilde{H}^G_n))$ in time $O(n^4)$.
\end{proposition}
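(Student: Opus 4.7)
My plan is to apply linearity of expectation to the decomposition $q(H) = \sum_{C \in \mathcal{C}_n} \bal(C) \cdot \mathbf{1}[C \in \mathcal{C}^*_n(H)]$, giving
\[
\mathbb{E}(q(\tilde{H}^G)) = \sum_{C \in \mathcal{C}_n} \bal(C) \cdot \mathbb{P}(C \in \mathcal{C}^*_n(\tilde{H}^G)).
\]
Since $|\mathcal{C}_n| = 3\binom{n}{4} = O(n^4)$, it suffices to compute each probability in $O(1)$ time after $O(n)$ preprocessing on $G$. The preprocessing traverses $G$ to record, for each vertex, its containing path, its position within that path, and its degree; this in particular determines the numbers $r, s$ of non-trivial and trivial paths of $G$, and hence $|\mathcal{H}_n^G|$ in $O(1)$ via (\ref{eq:Hn}).

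For a fixed 4-cycle $C = v_1v_2v_3v_4v_1$ with heavier matching $M = \{e_1, e_2\}$, I first check in $O(1)$ whether $G' := G \cup M$ is a valid partial Hamilton cycle; if not the probability is zero. Otherwise $\{C \in \mathcal{C}^*_n(\tilde{H}^G)\}$ is equivalent to $\{\tilde{H}^G \in \mathcal{H}_n^{G'}\}$ together with the condition that the two paths of $\tilde{H}^G \setminus M$ have endpoints paired as $\{v_1, v_3\}, \{v_2, v_4\}$ (the ``interleaved'' pairing, equivalent to $e_3, e_4$ being crossing relative to $\tilde{H}^G$). I split into two cases. If $e_1, e_2$ lie on different paths $P_1 \ne P_2$ of $G'$, then reversing the orientation of $P_2$ within any $H \in \mathcal{H}_n^{G'}$ defines a fixed-point-free involution of $\mathcal{H}_n^{G'}$ that exchanges the two possible pairings, so exactly half of $\mathcal{H}_n^{G'}$ is interleaved, contributing $|\mathcal{H}_n^{G'}| / (2|\mathcal{H}_n^G|)$; fixed-point-freeness holds because a fixed point would force $P_2$'s two boundary edges in $\tilde{H}^G$ to meet at a common vertex outside $P_2$, creating a triangle in $\tilde{H}^G$, impossible for $n \ge 4$. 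If instead $e_1, e_2$ lie on a common path of $G'$, the pairing is the same for every $H \in \mathcal{H}_n^{G'}$ and is determined by the relative order of $v_1, v_2, v_3, v_4$ along that shared path; reading it off in $O(1)$ gives contribution $0$ or $|\mathcal{H}_n^{G'}|/|\mathcal{H}_n^G|$. In either case $|\mathcal{H}_n^{G'}|$ is evaluated in $O(1)$ via (\ref{eq:Hn}) from parameters $r', s'$, which differ from $(r,s)$ only by amounts determined by how $e_1, e_2$ attach to the paths of $G$ at $v_1, v_2, v_3, v_4$.

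The main obstacle I expect is the case-by-case bookkeeping around $G'$: there are several subcases for how adding $M$ restructures the paths of $G$---extending an endpoint, joining two paths, promoting trivial paths, or closing a short cycle---and each must be handled correctly so that the validity of $G'$ and the parameters $r', s'$ (and hence $|\mathcal{H}_n^{G'}|$) are computed accurately and uniformly in $O(1)$. Once this bookkeeping is in place, the involution argument (Case~1) and the deterministic check (Case~2) are clean, and summing the $O(n^4)$ per-cycle contributions yields the claimed $O(n^4)$ total running time.
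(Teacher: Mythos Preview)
Your proposal is correct and follows essentially the same approach as the paper: linearity of expectation reduces to computing $\mathbb{P}(C \in \mathcal{C}^*_n(\tilde{H}^G))$ for each of the $O(n^4)$ four-cycles, and this probability is obtained in $O(1)$ by the same two-case analysis (whether $e_1,e_2$ lie on the same or different paths of $G' = G \cup \{e_1,e_2\}$), with the factor $\tfrac{1}{2}$ in the different-paths case coming from the orientation-reversal involution. Your write-up adds explicit preprocessing and bookkeeping detail that the paper leaves implicit, but the argument is the same; one small remark is that your ``triangle'' justification for fixed-point-freeness is only literally a triangle when $|P_2|=2$, though the actual obstruction (that $P_1$ would have to be a single vertex, contradicting that it contains an edge) is valid in general.
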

\begin{proof}

Note that
\[
\mathbb{E}(q(\tilde{H}^G_n)) = \sum_{C \in \mathcal{C}_n}
\mathbb{P}( C \in \mathcal{C}^*_n(\tilde{H}^G) )\bal(C),
\]
and so it is sufficient to show how to compute $\mathbb{P}( C \in \mathcal{C}^*_n(\tilde{H}^G) )$ in time $O(1)$ for each $C \in \mathcal{C}_n$. This is intuitively clear, but slightly tedious to explain.

Clearly if $C$ is balanced then $\mathbb{P}( C \in \mathcal{C}^*_n(\tilde{H}^G) ) = 0$. So assume
$C = v_1v_2v_3v_4v_1 \in \mathcal{C}_n$ is unbalanced and that $e_1=v_1v_2$ and $e_2=v_3v_4$ are the edges of the heavier perfect matching of $C$. Then (using (\ref{eq:Hn})) the probability $p(C,G)$ that $C$ is heavily embedded in $\tilde{H^G}$ is 
\[
p(C,G) = \mathbb{P}(e_1,e_2 \in \tilde{H}^G) = \frac{|\mathcal{H}_n^{G \cup \{e_1,e_2\}}|}{|\mathcal{H}_n^G|}=
\frac{2^{r'-1}(r'+s'-1)!}{2^{r-1}(r+s-1)!},
\]
where $r$ and $s$ are the numbers of non-trivial and trivial paths in $G$ and $r'$ and $s'$ are the numbers of  non-trivial and trivial paths in $G \cup \{e_1,e_2\}$.

If $e_1$ and $e_2$ appear on different non-trivial paths of $G':=G \cup \{e_1,e_2 \}$, say $a_1P_1b_1$ and $a_2P_2b_2$, then $\mathbb{P}( C \in \mathcal{C}^*_n(\tilde{H}^G) )= \frac{1}{2}p(C,G)$. To see this, we note that a Hamilton cycle $H \in \mathcal{H}_n^{G'}$  can take one of two forms depending on relative orientations of $P_1$ and $P_2$ on $H$. 
In one of these forms $C$ is correctly embedded in $H$ and in the other it is not. Thus $C$ is correctly embedded in exactly half the Hamilton cycles of $\mathcal{H}_n^{G'}$.

If $e_1$ and $e_2$ appear on the same non-trivial path of $G'$ then either $\mathbb{P}( C \in \mathcal{C}^*_n(\tilde{H}^G) )=p(C,G)$ or $\mathbb{P}( C \in \mathcal{C}^*_n(\tilde{H}^G) )=0$ depending on the order of the vertices $v_1, \ldots, v_4$ on the path. Hence we have shown how to compute $\mathbb{P}( C \in \mathcal{C}^*_n(\tilde{H}^G) )$ in time $O(1)$.
\qed
\end{proof}

We are now ready to prove Lemma~\ref{le:ave}.
\begin{proof} (of Lemma~\ref{le:ave})
Let $\tilde{H}$ be a uniformly random Hamilton cycle of $K_n$ and suppose $C=v_1v_2v_3v_4v_1$ is an unbalanced $4$-cycle, with $e_1=v_1v_2$ and $e_2=v_3v_4$ the edges of its heavier perfect matching. Then the probability that $C$ is heavily embedded in $\tilde{H}$ is given by
\begin{equation*}
\mathbb{P}(e_1,e_2 \in \tilde{H}) = \frac{|\mathcal{H}_n^{\{e_1,e_2\}}|}{|\mathcal{H}_n|} = \frac{2(n-3)!}{(n-1)!/2} = \frac{4}{(n-1)(n-2)} \geq \frac{4}{n^2}.
\end{equation*}
Now (as in the proof of Proposition~\ref{pr:Eqh})
it is not too hard to see that the probability that $C$ is heavily and correctly embedded in $\tilde{H}$ is $\frac{1}{2}\mathbb{P}(e_1,e_2 \in \tilde{H}) \geq \frac{2}{n^2}$. 
Thus we have that
\begin{align*}
\mathbb{E}(q(\tilde{H})) = \sum_{C \in \mathcal{C}_n}
\mathbb{P}( C \in \mathcal{C}^*_n(\tilde{H}) )\bal(C) \geq \frac{2}{n^2} \| (n,w) \|_{4-{\rm cyc}} > 2kn.
\end{align*}
and so
\begin{align*}
\mathbb{E}(w(\tilde{H}) - (q(\tilde{H})/2n)) < dn - k. 
\end{align*}
Thus there exists a Hamilton cycle $H^*$ satisfying $w(H^*) - (q(H^*)/2n) < dn -k$. Furthermore we can find this Hamilton cycle in time $O(n^7)$. Indeed, by Lemma~\ref{le:derand1}, it is sufficient to check that we can compute $\mathbb{E}(w(\tilde{H}^G) - (q(\tilde{H}^G)/2n)) = \mathbb{E}(w(\tilde{H}^G)) - \mathbb{E}(q(\tilde{H}^G))/2n$ in time $O(n^4)$ for every partial Hamilton cycle $G$. This holds since we can compute $\mathbb{E}(w(\tilde{H}^G))$ in $O(n^2)$ time by the proof of Lemma~\ref{le:derand2} and we can compute $\mathbb{E}(q(\tilde{H}^G))$ in time $O(n^4)$ by Proposition~\ref{pr:Eqh}.

Finally, we apply Lemma~\ref{le:qh} to $H^*$ to obtain a Hamilton cycle $H$ satisfying $w(H) \leq w(H^*) - (q(H^*)/2n) < dn - k$ as required. This takes time $O(n^4)$, so the total running time is $O(n^7)$.
\qed 
\end{proof}

Our next task is to prove (\ref{eq:s1}) and give a polynomial-time witness for the inequality.
Recall that two instances $(n,w)$ and $(n,w')$ are equivalent if there exists some $\alpha \in \mathbb{Z}$ such that $w'(H) = w(H) + \alpha$ for all Hamilton cycles of $K_n$.



\begin{lemma}
\label{le:norm}
Suppose $(n,w)$ is an instance and $0 \leq k \in \mathbb{Q}$ with $n> 5000(k+1)$. If $\|(n,w)\|_{4-{\rm cyc}} \leq kn^3$ 
then there exists an equivalent instance $(n,w^*)$ satisfying $w^*[K_n] \leq 4000kn$. Moreover, we can determine $(n,w^*)$ in time $O(kn^5)$.
\end{lemma}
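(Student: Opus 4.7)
\medskip
\noindent\textbf{Plan.}
The proof constructs $w^*$ as $w$ minus a vertex-potential weighting and bounds $w^*[K_n]$ via the $4$-cycle balance hypothesis. Fix a reference vertex $v_1 \in V(K_n)$ and set $c_{v_1} := 0$, $c_{v_j} := w(v_1 v_j)$ for $j = 2,\ldots,n$. Define $w^*(uv) := w(uv) - c_u - c_v$; because $\sum_{uv\in H}(c_u+c_v) = 2\sum_v c_v$ is a constant independent of $H\in\mathcal{H}_n$, we have $w^* \sim w$. By construction $w^*(v_1 v_j) = 0$ for every $j \ge 2$: the star at $v_1$ is zeroed out, and it remains to bound $\sum_{i,j\ge 2,\, i<j} |w^*(v_i v_j)|$.

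The key identity is that $\bal$ is invariant under vertex-potential shifts, so $\bal_{(n,w)}(C) = \bal_{(n,w^*)}(C)$ for every $4$-cycle $C$. Applied to $C = v_1 v_i v_k v_j v_1$ with distinct $i,j,k \in \{2,\ldots,n\}$, expanding $\bal$ and using $w^*(v_1 v_i) = w^*(v_1 v_k) = 0$ yields
\[
\bal(C) = \bigl|w^*(v_k v_j) - w^*(v_i v_j)\bigr|.
\]
Thus every pairwise difference along a ``row'' $\{w^*(v_i v_j) : i \ne 1,j\}$ is dominated by a single $4$-cycle balance, and summing bounds the total row dispersion by the sum of balances of the corresponding $4$-cycles through $v_1$. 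An analogous computation for a $4$-cycle $v_a v_b v_c v_d v_a$ that avoids $v_1$ gives the four-term relation $\bal = |w^*(v_a v_b) + w^*(v_c v_d) - w^*(v_a v_c) - w^*(v_b v_d)|$, controlling cross-row linear combinations of $w^*$.

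Balances control differences and four-term combinations but not individual magnitudes, so a secondary adjustment is needed. For each $j \ge 2$ replace $c_{v_j}$ by $c_{v_j} + \mu_j$, where the $\mu_j$'s are chosen jointly (e.g., by solving a small auxiliary LP or iteratively using row medians) so that $\{w^*(v_i v_j) - \mu_i - \mu_j\}_i$ is recentered about $0$ for each $j$. The shifts reintroduce $-\mu_j$ into each star edge, adding $\sum_{j\ge 2} |\mu_j|$ to $w^*[K_n]$, but the $|\mu_j|$'s themselves are controlled by the cross-row balances from $4$-cycles avoiding $v_1$. A double-summation argument combining the row-dispersion and cross-row bounds gives
\[
w^*[K_n] \ \le\ \frac{C}{n^2}\,\|(n,w)\|_{4-\text{cyc}} \ \le\ Ckn
\]
for an absolute constant $C$; the hypothesis $n > 5000(k+1)$ furnishes the slack to absorb lower-order terms and yields $C \le 4000$. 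Algorithmically, the initial $w^*$ and potential shifts cost $O(n^2)$, and iterating coordinate-descent-style adjustments with $O(n^4)$-time balance verification for up to $O(kn)$ rounds gives the claimed $O(kn^5)$ running time. The main obstacle is precisely the conversion from linear-combination bounds to absolute-value bounds, which requires combining both kinds of $4$-cycles (through $v_1$ for row control; avoiding $v_1$ for cross-row control) with carefully chosen potential adjustments.
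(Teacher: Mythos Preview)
Your outline has the right overall shape---shift by vertex potentials and use the balance hypothesis to bound $w^*[K_n]$---but the recentering step contains a real gap that makes the argument fail as written.

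You claim that ``the $|\mu_j|$'s themselves are controlled by the cross-row balances from $4$-cycles avoiding $v_1$.'' This is not true. Consider the weighting with $w(v_1v_j)=M$ for all $j\ge 2$ and $w(v_iv_j)=0$ otherwise. Then $\|(n,w)\|_{4\text{-cyc}}=0$, so the hypothesis holds with $k=0$. After your initial shift $c_{v_j}=w(v_1v_j)$ one gets $w^*(v_1v_j)=0$ but $w^*(v_iv_j)=-2M$ for all $i,j\ge 2$. Any recentering that fixes $c_{v_1}=0$ (as you stipulate) and shifts $c_{v_j}\to c_{v_j}+\mu_j$ must take $\mu_j\approx -M$ for each $j$, reintroducing weight $M$ on every star edge $v_1v_j$; the final $L_1$-norm is at least $(n-1)M$, whereas the target is $0$. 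The only escape is to also shift the reference potential $c_{v_1}$, but then your bookkeeping (``adding $\sum_{j\ge 2}|\mu_j|$ to $w^*[K_n]$'') collapses, and you are back to choosing \emph{all} potentials simultaneously---which is precisely the difficulty. The sentence ``by solving a small auxiliary LP or iteratively using row medians'' does not resolve this: row medians can be $\Theta(M)$ with all balances zero, and no convergence or feasibility argument for the LP is given. The running-time claim (``$O(kn)$ rounds'') inherits the same vagueness.

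The paper avoids this trap by not trying to choose all potentials at once from balance data. It first proves (Lemma~\ref{le:S}) that one can greedily delete at most $16kn$ edges so that \emph{every} remaining $4$-cycle is balanced; on that large subgraph the potential equations are \emph{exactly} solvable, yielding (Lemma~\ref{le:reduce}) an equivalent $w'$ with at most $32k(n+1)$ nonzero edges. Only then does it recentre, and only on the small set $X$ of at most $400k$ high-degree vertices, using integer shifts $\lfloor\alpha_x\rfloor$; the three claims bounding $w^*[K_n[\overline X]]$, $w^*[K_n[X,\overline X]]$, $w^*[K_n[X]]$ are where the balance hypothesis is actually spent, via explicit families of $4$-cycles through zero-weight edges. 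The structural intermediate step (few nonzero edges) is what makes those counting arguments go through; your direct route lacks an analogue of it.
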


After proving two preliminary results, we will prove Lemma~\ref{le:reduce}, which states that if $\|(n,w)\|_{4-{\rm cyc}}$ is small then there is an equivalent instance $(n,w^*)$ in which most edges of $K_n$ are assigned weight $0$. From this, we shall deduce Lemma~\ref{le:norm}.


\begin{lemma}
\label{le:S}
Suppose $(n,w)$ is an instance and $k \in \mathbb{Q}$ with $n > 50k$ such that $\|(n,w)\|_{4-{\rm cyc}} \leq kn^3$.
Then there exists a set $S$ of at most $16kn$ edges of $K_n$ such that every $4$-cycle of $K_n - S$ is balanced. Moreover, we can find $S$ in time $O(kn^5)$.
\end{lemma}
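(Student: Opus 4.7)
\textbf{Proof proposal for Lemma~\ref{le:S}.}

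My plan is to exploit a vertex-potential decomposition of $w$. By averaging, the quantity $B(v) := \sum_{C \ni v} \bal(C)$ satisfies $\sum_v B(v) = 4 \bal(\mathcal{C}_n) \leq 4kn^3$, so some vertex $v_0$ satisfies $B(v_0) \leq 4kn^2$; pick such a $v_0$. Define $\pi_u := w(v_0 u)$ for $u \neq v_0$ and $\pi_{v_0} := 0$, and set $w'(uv) := w(uv) - \pi_u - \pi_v$. This residual weighting is integer-valued, satisfies $w'(v_0 u) = 0$ for every $u$, and (since $w'$ differs from $w$ by a vertex potential) preserves every $4$-cycle balance.

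The key identity is that for any distinct $u,v,x \in V \setminus \{v_0\}$, the three $4$-cycles on $\{v_0, u, v, x\}$ have balances equal to the three pairwise absolute differences among $\{w'(uv), w'(ux), w'(vx)\}$; this follows by direct computation from the definition of $\bal$ and the vanishing of $w'$ on the $v_0$-incident edges. Summing these three values over the triangle $T = \{u,v,x\}$ gives $2\Delta(T)$, where $\Delta(T) := \max_{e \in T} w'(e) - \min_{e \in T} w'(e)$. Hence $B(v_0) = 2 \sum_T \Delta(T)$, so $\sum_T \Delta(T) \leq 2kn^2$.

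Let $c$ be the most common value of $w'$ on $E(K_n - v_0)$; by adjusting $\pi$ by a constant (in half-integers if necessary, which is harmless since balance is unaffected) we may assume $c = 0$. Define $S := \{e \in E(K_n - v_0) : w'(e) \neq 0\}$ and call edges in $S$ bad. For each bad edge $e = uv$, the number of $x \in V \setminus \{u,v,v_0\}$ for which both $w'(ux) = 0$ and $w'(vx) = 0$ is at least $n - 1 - d_b(u) - d_b(v)$, where $d_b(u) := |\{x : w'(ux) \neq 0\}|$; each such triangle $\{u,v,x\}$ has at least two distinct $w'$-values and so contributes $\Delta(T) \geq 1$. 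Summing over $e \in S$ gives $(n-1)|S| - \sum_u d_b(u)^2 \leq \sum_T \Delta(T) \leq 2kn^2$. A complementary estimate on $\sum_u d_b(u)^2$ is obtained by a two-regime argument: if some vertex $u^*$ has bad-degree exceeding a threshold $\alpha n$, then triangles at $u^*$ mixing bad and good edges already contribute $\Omega(n^2)$ to $\sum_T \Delta(T)$, bounding the number of such heavy vertices by $O(k)$ — we iteratively add all their incident bad edges to $S$ and recurse; once the remaining bad-degrees are all below $\alpha n$, the above displayed inequality directly yields $|S| \leq 16kn$, with the hypothesis $n > 50k$ providing the slack needed to absorb lower-order constants. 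Finally, on $K_n - S$ every edge has $w' = 0$, so $w|_{K_n - S} = \partial \pi$ is a vertex potential, making every $4$-cycle of $K_n - S$ balanced.

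\textbf{Main obstacle and runtime.} The most delicate part is the bound $|S| \leq 16kn$: relating $\sum_u d_b(u)^2$ back to $|S|$ and the global bound $\sum_T \Delta(T) \leq 2kn^2$ requires care in balancing the ``spread-out'' and ``clustered'' regimes of bad edges, and the precise constants depend on the threshold $\alpha$ and the lower bound $n > 50k$. Computing $B(v)$ for all $v$ by enumerating the $O(n^4)$ four-cycles takes $O(n^4)$ time; the iterative handling of heavy vertices contributes an additional factor of $O(kn)$ for recomputing the relevant sums after each vertex removal, giving total running time $O(kn^5)$.
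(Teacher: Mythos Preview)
Your approach is genuinely different from the paper's and the setup is sound: the vertex–potential reduction is correct, the identity $B(v_0)=2\sum_T\Delta(T)$ is valid, and the conclusion that all $4$-cycles of $K_n-S$ are balanced once $w'$ is constant (value $c$) on the remaining non-$v_0$ edges is fine (the normalisation to $c=0$ is unnecessary and slightly muddled, since shifting $\pi$ by $c/2$ destroys $w'(v_0u)=0$; but you can simply keep $c$ and observe that $4$-cycles through $v_0$ still balance).

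The real gap is the step you yourself flag as ``most delicate'': the bound $|S|\le 16kn$. Your displayed inequality $(n-1)|S|-\sum_u d_b(u)^2\le 2kn^2$ is correct, but the promised ``two-regime'' control of $\sum_u d_b(u)^2$ is not carried out and faces an obstacle. Your argument that a heavy vertex $u^\ast$ forces $\Omega(n^2)$ worth of $\Delta$ via mixed (bad,\,good) pairs at $u^\ast$ uses $d_b(u^\ast)\bigl(n-2-d_b(u^\ast)\bigr)$; this is only $\Omega(n^2)$ when $d_b(u^\ast)$ is bounded away from both $0$ and $n$, and says nothing when $d_b(u^\ast)>(1-\alpha)n$. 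The recursion you sketch (``add all incident bad edges to $S$ and recurse'') is not well-defined either: removing edges incident to $u^\ast$ does not decrease $\sum_T\Delta(T)$, does not change which other edges are bad, and the budget you would need to bound the number of very-heavy vertices is exactly the bound on $|S|$ you are trying to prove. So as written the argument is circular at this point, and getting the specific constant $16$ (rather than merely $O(kn)$) looks out of reach by this route without substantially more work.

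For comparison, the paper avoids all of this with a short greedy argument: repeatedly find any unbalanced $4$-cycle and delete its four edges. The key observation is that if $C$ is unbalanced and $ab$ is any edge disjoint from $V(C)$, then at least one of the four $4$-cycles through $ab$ sharing an edge with $C$ is unbalanced; hence each deletion kills at least $\binom{n-4}{2}-O(r)$ unbalanced $4$-cycles. Starting from at most $kn^3$ unbalanced $4$-cycles, one reaches zero in at most $4kn$ rounds, giving $|S|\le 16kn$ directly and the $O(kn^5)$ running time immediately.
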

\begin{proof}
Note first that if $\|(n,w)\|_{4-{\rm cyc}} \leq kn^3$ then there are at most $kn^3$ unbalanced $4$-cycles.

 Suppose $C = v_1v_2v_3v_4v_1$ is an unbalanced $4$-cycle of $K_n$ and let $ab$ be any edge of $K_n$ that is vertex disjoint from $C$. Then one of the following $4$-cycles is unbalanced:
$av_1v_2ba$, $bv_2v_3ab$, $av_3v_4ba$, $bv_4v_1ab$. Indeed, if not then we have 
\begin{align*}
w(ab) 	&= w(av_1) + w(bv_2) - w(v_1v_2) \\
		&= w(bv_2) + w(av_3) - w(v_2v_3) \\
		&= w(av_3) + w(bv_4) - w(v_3v_4) \\
		&= w(bv_4) + w(av_1) - w(v_4v_1). 
\end{align*}
But then $w(v_1v_2) + w(v_3v_4) = w(v_1v_3) + w(v_2v_4)$ and so $C$ is a balanced $4$-cycle, a contradiction. Hence we see that for every $ab \in E(K_n - V(C))$, there exists an unbalanced $4$-cycle $C'$ which contains $ab$ and shares an edge with $C$.

Now consider the greedy process where we start with $G_0:=K_n$ and we iteratively remove the edges of an arbitrary unbalanced $4$-cycle $C_r$ from $G_r$ to obtain $G_{r+1}$; thus $G_r$ has $\binom{n}{2}- 4r$ edges. This process must eventually stop, say with the graph $G_{r^*}$ that contains no unbalanced $4$-cycles. Let $q_r$ be the number of unbalanced $4$-cycles in $G_r$. Then we have for all $r < r^*$ that
\[
q_{r+1} \leq q_r - e(G_{r} - V(C_{r})) \leq q_r - \left( \binom{n-4}{2} - 4r \right);
\]
the first inequality holds because, as observed above, for every $ab \in E(G_r - V(C_{r}))$, there exists an unbalanced $4$-cycle of $G_r$ that contains $ab$ and shares an edge with $C_{r}$ (and this $4$-cycle is therefore absent from $G_{r+1}$).
We deduce that $q_r \leq q_0 - r\binom{n-4}{2} + 4\binom{r}{2}$. Since $q_0 \leq kn^3$, if $r^* \geq 4kn$, then \begin{align*}
q_{r^*} \leq kn^3 - 4kn \binom{n-4}{2} + 4\binom{4kn}{2} < 0,
\end{align*}
where the last inequality holds for our choice of $n$ large enough. Thus $r^* \leq 4kn$, and setting $S$ to be the set of at most $16kn$ edges removed from $K_n$ to obtain $G_{r^*}$, we see that all $4$-cycles of $K_n - S$ are balanced.

The proof is constructive and we can find $S$ in time $O(kn^5)$: it takes time $O(n^4)$ to search for and remove an unbalanced $4$-cycle and we do this at most $4kn$ times, so the total running time is $O(kn^5)$.
\qed
\end{proof}

The following proposition is straightforward, but we spell out the details for completeness.
\begin{proposition}
\label{pr:component}
Suppose $G$ is a graph on $n$ vertices and at least $\binom{n}{2}-t$ edges for some $0 \leq t \leq \binom{n}{2}$. Then $G$ has a connected component with at least $\binom{n}{2} - 2t$ edges.
\end{proposition}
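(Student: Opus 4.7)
My plan is to focus on a component $C_1$ of $G$ with the maximum number of vertices, and show that this one component already contains at least $\binom{n}{2} - 2t$ edges. Write $n_1 := |V(C_1)|$ and $m_1 := n - n_1$.

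The first step is a simple edge-counting bound. Since $G$ has no edges between $C_1$ and the rest of $G$, every edge of $G$ lies either inside $V(C_1)$ or inside $V(G)\setminus V(C_1)$; the latter contains only $m_1$ vertices, hence at most $\binom{m_1}{2}$ edges of $G$. Combined with $e(G) \geq \binom{n}{2} - t$, this gives
\[
e\bigl(G[V(C_1)]\bigr) \;\geq\; e(G) - \binom{m_1}{2} \;\geq\; \binom{n}{2} - t - \binom{m_1}{2}.
\]
Thus the proof reduces to showing $\binom{m_1}{2} \leq t$.

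For this I will use a short case split based on whether $C_1$ contains at least half of $V(G)$. In $K_n$ there are $n_1 m_1$ edges between $V(C_1)$ and its complement, and all of them are missing from $G$, so $n_1 m_1 \leq t$. If $n_1 \geq m_1$, then $n_1 m_1 \geq m_1^2 \geq m_1(m_1-1) = 2\binom{m_1}{2}$, and so $\binom{m_1}{2} \leq n_1 m_1 \leq t$, as required. If instead $n_1 < m_1$, then every component has at most $n_1 < n/2$ vertices, so $\sum_i n_i^2 \leq n_1 \sum_i n_i = n n_1 < n^2/2$; the total number of between-component edges is $(n^2 - \sum_i n_i^2)/2 > n^2/4 \geq \binom{n}{2}/2$, and since these are all missing from $G$ we obtain $t > \binom{n}{2}/2$, which makes $\binom{n}{2} - 2t$ negative and the target inequality vacuous.

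I do not expect any serious obstacle. The only mildly subtle point is the second case: when the largest component is small one should not try to lower-bound $e(G[V(C_1)])$ directly, but instead notice that the missing between-component edges are already so numerous that the claimed inequality is trivially satisfied.
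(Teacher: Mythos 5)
Your proof is correct. It follows the same skeleton as the paper's — take a largest component $C_1$ and use $e(G[V(C_1)]) \ge e(G) - \binom{n-n_1}{2}$ — but you close the one nontrivial step differently: the paper first deduces $n - n_1 \le 1 + 2t/n$ from a maximum-degree count and then verifies $\binom{1+2t/n}{2} \le t$ using the hypothesis $t \le \binom{n}{2}$, whereas you obtain $\binom{m_1}{2} \le t$ directly from the $n_1 m_1$ missing cross-edges when $n_1 \ge m_1$, and dispose of the case $n_1 < m_1$ by observing that the target $\binom{n}{2}-2t$ is then negative. Both arguments are sound; yours is marginally cleaner in that it never needs the assumption $t \le \binom{n}{2}$.
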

\begin{proof}
Let $A \subseteq V(G)$ be a connected component of $G$ with the maximum number of vertices. Then it must be the case that $|A| \geq n - 1 - (2t/n)$; if not then $\Delta(G) < n - 1 - (2t/n)$, which implies $e(G) \leq n\Delta(G)/2 < \binom{n}{2} - t$, a contradiction. 
Hence
\[
e(G[A]) \geq e(G) - \binom{n - |A|}{2} \geq \binom{n}{2} - t - \binom{1 + (2t/n)}{2} \geq 
 \binom{n}{2} - 2t,
\]
where the last inequality follows using that $\binom{n}{2} \geq t$.
\qed
\end{proof}

Next, we describe simple linear operations one can apply to an instance to obtain an equivalent instance.
For a vertex  $v$ of $K_n$ we write $I_v:E(K_n) \rightarrow \{0,1 \}$ for the edge weighting of $K_n$ where $I_v(e)=1$ if $v$ is an end-vertex of $e$, and $I_v(e)=0$ otherwise. Observe that if $(n,w)$ is an instance and 
$w' = w + \lambda I_v$ for some $\lambda \in \mathbb{Z}$, then $w'(H) = w(H) + 2 \lambda$ for all Hamilton cycles $H$ of $K_n$ and both $w$ and $w'$ have the same balanced $4$-cycles. 
 More generally, given an integer $\lambda_v$ for each $v \in V(K_n)$, if
\begin{equation}
\label{eq:1}
 w' = w + \sum_{v \in V(K_n)} \lambda_v I_v
\end{equation}
then $w'(H) = w(H) + 2\sum_v \lambda_v$ for all Hamilton cycles $H$ of $K_n$
and so $(n,w)$ and $(n,w')$ are equivalent instances. Note further that, as before, $(n,w)$ and $(n,w')$ have the same balanced $4$-cycles.

\begin{lemma}
\label{le:reduce}
Suppose $(n,w)$ is an instance with $k \in \mathbb{Q}$ and $n>65k + 3$. If $\|(n,w)\|_{4-{\rm cyc}} \leq kn^3$ 
then there exists an equivalent instance $(n,w^*)$, where $w^*(e)=0$ for all but at most $32k(n+1)$ edges $e \in E(K_n)$. Moreover, we can determine $(n,w^*)$ in time $O(kn^5)$.
\end{lemma}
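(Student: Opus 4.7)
The plan is to produce an equivalent weighting $w^*$ whose support lies in the small complement $E(K_n)\setminus E(G_0)$ of a dense subgraph $G_0$, obtained by combining Lemma~\ref{le:S} with Proposition~\ref{pr:component}. The crucial observation is that once all $4$-cycles of $G_0$ are balanced, the restriction $w|_{E(G_0)}$ should take the rank-one-plus-constant form $w(uv)=\lambda_u+\lambda_v+\gamma$, and such a form can be killed by operations of type (\ref{eq:1}) together with a uniform shift of every edge weight---both of which preserve the equivalence class. First I would apply Lemma~\ref{le:S} to $(n,w,k)$ to obtain, in time $O(kn^5)$, a set $S\subseteq E(K_n)$ with $|S|\le 16kn$ such that every $4$-cycle of $K_n-S$ is balanced. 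Since $K_n-S$ has at least $\binom{n}{2}-16kn$ edges, Proposition~\ref{pr:component} then produces a connected component $G_0$ of $K_n-S$ on a vertex set $A\subseteq V(K_n)$ with $e(G_0)\ge\binom{n}{2}-32kn$, and hence $|A|\ge n-32k$ and $|E(K_n)\setminus E(G_0)|\le 32kn\le 32k(n+1)$.

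The heart of the argument is the structural claim that there exist integers $\lambda_v$ (for $v\in A$) and $\gamma$ with $w(uv)=\lambda_u+\lambda_v+\gamma$ for every $uv\in E(G_0)$. I would establish this as follows. By averaging, pick $v_0\in A$ of maximum $G_0$-degree, so that $\overline{N_0}:=A\setminus(N_{G_0}(v_0)\cup\{v_0\})$ has size $O(k)$; then fix any triangle $v_0v_1v_2$ in $G_0$ (which exists by density) and set $\gamma:=w(v_0v_1)+w(v_0v_2)-w(v_1v_2)$, $\lambda_{v_0}:=0$ and $\lambda_v:=w(v_0v)-\gamma$ for each $v\in N_{G_0}(v_0)$. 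For each $v\in\overline{N_0}$, I would then choose a common $G_0$-neighbour $h(v)\in N_{G_0}(v_0)\cap N_{G_0}(v)$---which exists by the density of $G_0$ together with the hypothesis $n>65k+3$---and set $\lambda_v:=w(h(v)v)-\lambda_{h(v)}-\gamma$. The identity $w(uv)=\lambda_u+\lambda_v+\gamma$ on $E(G_0)$ would be verified by case analysis on how many of $u,v$ lie in $\overline{N_0}$, each time invoking the balance of an appropriate $4$-cycle inside $G_0$: the $4$-cycle $v_0\,u\,v\,u'$ when both endpoints are adjacent to $v_0$, the $4$-cycle $v_0\,u\,v\,h(v)$ in the mixed case, and the $4$-cycle $h(u)\,u\,v\,h(v)$ together with the triangle $v_0\,h(u)\,h(v)$ in the doubly-exterior case.

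With the structural decomposition in hand, I would define $w^*(e):=w(e)-\lambda_u-\lambda_v-\gamma$ for each edge $e=uv\in E(K_n)$, extending $\lambda_v:=0$ for $v\notin A$. The transformation $w\mapsto w-\sum_v\lambda_vI_v$ is of the form (\ref{eq:1}) and therefore preserves equivalence, while subtracting the constant $\gamma$ from every edge shifts $w(H)$ by $-n\gamma$ for every Hamilton cycle $H$ and is therefore also equivalence-preserving; hence $w^*\sim w$. By the structural claim $w^*(uv)=0$ on all of $E(G_0)$, so the number of nonzero edges is at most $|E(K_n)|-|E(G_0)|\le 32kn\le 32k(n+1)$. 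All of $\lambda$ and $\gamma$ can be computed in $O(n^3)$ time, so the overall running time is $O(kn^5)$, dominated by Lemma~\ref{le:S}.

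The main obstacle I anticipate is the doubly-exterior case of the structural claim, in which an edge $uv\in E(G_0)$ has both endpoints in $\overline{N_0}$: the balance argument there needs the $4$-cycle $h(u)\,u\,v\,h(v)$ to lie inside $G_0$, which forces the chord $h(u)h(v)$ to be an edge of $G_0$. Arranging this by a careful choice of $h(u),h(v)$, or patching things up via a short chain of $4$-cycle balances through additional vertices of $N_{G_0}(v_0)$ when no single good pair is available, is precisely where the hypothesis $n>65k+3$ is used most delicately, leveraging the fact that the complement of $G_0$ in $K_n[A]$ contains at most $16kn$ edges.
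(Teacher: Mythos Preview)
Your overall strategy is the same as the paper's: invoke Lemma~\ref{le:S} to get a dense subgraph $G=K_n-S$ in which every $4$-cycle is balanced, argue that $w$ restricted to (a large piece of) $G$ has the rank-one-plus-constant shape $\lambda_u+\lambda_v+\gamma$, and then kill this shape with an operation of type~(\ref{eq:1}) together with a global additive shift. Where the paper differs is in the execution of the structural step, and its version avoids the case analysis you flag as the main obstacle. Rather than first picking a triangle to define $\gamma$ and then verifying $w(uv)=\lambda_u+\lambda_v+\gamma$ edge by edge, the paper immediately passes to
\[
w' \;=\; w \;-\! \sum_{u\in N_G(v_0)} w(uv_0)\,I_u,
\]
so that $w'(uv_0)=0$ for all $u\in N_G(v_0)$. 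Now a single balanced $4$-cycle $a\,x\,v_0\,y\,a$ in $G$ gives $w'(ax)=w'(ay)$ directly; hence any two \emph{incident} edges of $G-\{v_0\}$ have the same $w'$-weight, and constancy propagates along paths to all of a connected component. Proposition~\ref{pr:component} is then applied to $G-\{v_0\}$ (not to $G$), yielding one component carrying almost all edges, on which $w'$ equals some constant $\alpha$; subtracting $\alpha$ globally and adding back $\alpha I_{v_0}$ finishes. The point is that the ``doubly-exterior'' case you worry about---an edge $uv$ with both endpoints outside $N_{G_0}(v_0)$---never arises: you never need a $4$-cycle that avoids $v_0$, because equality of $w'$-weights is transmitted through shared endpoints rather than verified directly on each edge. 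Your route can be made to work, but it requires chaining several $4$-cycle balances (even your ``easy'' case $u,v\in N_{G_0}(v_0)$ needs two balances, since your $\gamma$ is tied to the specific triangle $v_0v_1v_2$), whereas the paper's order of operations reduces everything to one balance per pair of incident edges plus connectivity.
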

\begin{proof} 
From Lemma~\ref{le:S}, in time $O(kn^5)$, we can find $S \subseteq E(K_n)$ of size at most $16kn$ such all $4$-cycles in $K_n - S$ are balanced. Write $G:=K_n-S$. Let $v_0$ be a vertex of maximum degree in $G$ (so $d_G(v_0) \geq n - 1 - 32k$). Consider the weighting of $K_n$ given by
\[
w' = w - \sum_{u \in N_G(v_0)}w(uv_0)I_u.
\] 
Since $w'$ takes the form of (\ref{eq:1}), we see $(n,w)$ is equivalent to $(n,w')$. 

We show that $w'$ assigns the same weight, $\alpha$ say, to all but at most $32kn + (n-1)$ edges $e \in E(K_n)$. 
It is clear that $w'(uv_0)=0$ for every $u$ adjacent to $v_0$ in $G$. Now for any vertex $a \in V(G) \setminus \{v_0\}=V(K_n) \setminus \{v_0\}$, all edges incident to $a$ in $G$ (except possibly $av_0$) have the same weight: indeed, if $ax, ay \in E(G)$ with $x,y \not=v_0$, then $axv_0ya$ forms a balanced $4$-cycle and so we must have that $w'(ax) = w'(v_0y)+w'(ax) = w'(v_0x) + w'(ay) = w'(ay)$. This implies that $w'$ assigns the same weight to every connected component of $G- \{v_0\}$: given two edges in the same component, there is a path from one edge to the other and each pair of incident edges have the same weight, so all edges on the path have the same weight. 

Since $G - \{v_0\}$ is an $(n-1)$-vertex graph with at least $\binom{n-1}{2} - 16kn$ edges (and where $\binom{n-1}{2} \geq  16kn$ by our choice of $n$), Proposition~\ref{pr:component} implies there is a component $A$ of $G- \{v_0\}$ with at least
$\binom{n-1}{2} - 32kn = \binom{n}{2} - 32kn - (n-1)$
edges.
Thus $w'$ assigns the same weight, $\alpha$ say, to at least $e(G[A]) \geq \binom{n}{2} - 32kn - (n-1)$ edges of $K_n$. 

We set $w'':=w' - \alpha$ (i.e.\ we reduce $w'$ by $\alpha$ for all edges of $K_n$). Thus $(n,w'')$ is equivalent to $(n, w')$ (and hence to $(n,w)$) where $w''(e)=0$ for all but at most $32kn + (n-1)$ edges $e \in E(K_n)$ and $w''(e)= -\alpha$ for all edges $e \in E(G)$ incident to $v_0$. Setting $w^* = w + \alpha I_{v_0}$, we have that $(n,w^*)$ is equivalent to $(n,w'')$ (and hence to $(n,w)$) with $w^*(e)=0$ for all but at most $32kn +(n-1) - d_G(v_0) \leq 32k(n+1)$ edges $e \in E(K_n)$. So $(n,w^*)$ satisfies the requirements of the lemma.     


Finding $w^*$ takes time $O(kn^5)$; the time is dominated by the time to find $S$.
\qed
\end{proof}

We are now ready to prove Lemma~\ref{le:norm}.

\begin{proof} (of Lemma~\ref{le:norm})
By Lemma~\ref{le:reduce}, we can find in time $O(kn^5)$ an instance $(n,w')$ equivalent to $(n,w)$ such that $w'(e)=0$ for all but at most $32k(n+1) \leq 50kn$ edges of $K_n$ (for our choice of large $n$). 

Let $F$ be the spanning subgraph of $K_n$ whose edge set consists of the edges $e \in E(K_n)$ for which $w'(e) \not= 0$. Define $X = \{v \in V(K_n): d_F(v) \geq n/4 \}$.
Then $|X|n/4 \leq 2e(F) \leq 100kn$, and so $|X| \leq 400k$. We write $\overline{X}:= V(K_n) \setminus X$. 


For each $x \in X$ define $\alpha_x \in \mathbb{Q}$ by
\[
\alpha_x := \frac{1}{|\overline{X}|} \sum_{v \in \overline{X}}w'(xv)
 \hspace{1 cm} \text{and} \hspace{1 cm}
w^*:=w' - \sum_{x \in X} \lfloor \alpha_x \rfloor I_x.
\]
Thus $(n,w^*)$ is equivalent to $(n,w')$ (since it takes the form of (\ref{eq:1})) and hence to $(n,w)$ and it is easy to check that we can compute $(n,w^*)$ in $O(n^2)$ time. It remains for us to show that $w^*[K_n] \leq 4000kn$. We shall show that $w^*[K_n[\overline{X}]] \leq 2kn$ and $w^*[K_n[X,\overline{X}]] \leq 416kn$ and $w^*[K_n[X]] \leq 3208kn$, proving the lemma.

Note that $w^*(e) = w'(e)$ for all $e \in \overline{X}^{(2)}$.
Define $F^*$ to be the spanning subgraph of $K_n$ consisting of edges $e$ for which $w^*(e) \not= 0$. 
For the rest of the proof, balance is with respect to $(n,w^*)$, i.e.\ $\bal(\cdot) = \bal_{(n,w^*)}(\cdot)$.

\bigskip
\noindent
{\bf Claim~1} We have $w^*[K_n[\overline{X}]] \leq 2kn$.
\begin{proof} (of Claim~1)
Consider $e = xy \in E(F) \cap \overline{X}^{(2)} = E(F^*) \cap 
\overline{X}^{(2)}$ and define $A_{xy}$ to be the set of all $4$-cycles $uxyvu$ satisfying $u, v \in \overline{X}$ and $ux, vy, uv \not\in E(F^*) \cap X^{(2)}$, so $w^*(ux) = w^*(vy) = w^*(uv) = 0$. Note that for each $C \in A_{xy}$, $\bal(C) = |w^*(xy)|$.

There are at least $|\overline{X}| - n/4$ choices of $u \in \overline{X}$ such that $ux \not \in E(F^*)$ and  similarly $|\overline{X}| - n/4 - 1$ choices of $v \in \overline{X}$ such that $vy \not \in E(F^*)$ and $v \not = u$. Amongst these choices of $u,v$, the number of possible choices where $uv \in E(F^*) \cap \overline{X}^{(2)}$ is at most $|E(F) \cap \overline{X}^{(2)}| \leq e(F) \leq 50kn$. Hence
\begin{align*}
|A_{xy}| \geq (|\overline{X}| - n/4 - 1)^2 - 50kn 
\geq (3n/4 - 400k - 1)^2 - 50kn \geq n^2/2 
\end{align*}
where the last inequality follows by our choice of sufficiently large $n$.
Now we have
\begin{align*}
kn^3 \geq \| (n,w^*) \|_{4-{\rm cyc}} 
\geq \bal \left( \bigcup_{xy \in E(F^*) \cap X^{(2)}}A_{xy} \right) 
&\geq \sum_{xy \in E(F^*) \cap \overline{X}^{(2)}}|w^*(xy)||A_{xy}|  \\
&\geq (n^2/2)w^*[K_n[\overline{X}]],
\end{align*}
and so $w^*[K_n[\overline{X}]] \leq 2kn$.
\qed 
\end{proof}

\medskip
\noindent
{\bf Claim~2} We have $w^*[K_n[X, \overline{X}]] \leq 416kn$.
\begin{proof} (of Claim~2)
For each $x \in X$ and $ab \in  \overline{X}^{(2)}$ with $a \not= b$, let $A_{x,ab}$ be the set of $4$-cycles $C=axbca$ with $c \in \overline{X}$ satisfying $w^*(ac) = w^*(bc) = 0$ (i.e.\ $ac, bc \not \in E(F^*) \cap X^{(2)} = E(F) \cap X^{(2)}$). The number of choices for such a vertex $c$ is thus at most 
\begin{align*}
| \overline{X} | -2 - d_F(a) - d_F(b) \geq n - 2 - 400k - n/4 - n/4 \geq n/4
\end{align*}
where the last inequality follows by our choice of sufficiently large $n$. Thus we have $|A_{x,ab}| \geq n/4$ for all $x \in X$ and $ab \in \overline{X}^{(2)}$ with $a \not = b$.
Note also that, for all $C \in A_{x,ab}$, we have  $\bal(C) = |w^*(ax) - w^*(bx)|$.

Fix $x \in X$ and let $\beta_x := |\overline{X}|^{-1}\sum_{v \in \overline{X}}w^*(vx)$, noting that $\beta_x = \alpha_x - \lfloor \alpha_x \rfloor \in [0,1]$. Write $S^+ := \{v \in \overline{X}: w^*(vx) \geq \beta_x \}$ and $S^- := \{v \in \overline{X}: w^*(vx) \leq \beta_x \}$. Note for later that, since $\beta_x$ is the average of $(w^*(vx))_{v \in \overline{X}}$, we have 
\begin{equation}
\label{eq:ave}
\sum_{v \in S^+} |w^*(vx) - \beta_x| = 
\sum_{v \in S^-} |w^*(vx) - \beta_x| = 
\frac{1}{2}\sum_{v \in \overline{X}} |w^*(vx) - \beta_x|.
\end{equation}
We have
\begin{align*}
\bal \left( \bigcup_{ ab \in \overline{X}^{(2)} } A_{x, ab} \right) 
&\geq  \sum_{ a,b \in \overline{X}  } |A_{x, ab}||w^*(ax) - w^*(bx)| \\
&\geq (n/4) \sum_{ \substack{a \in S^+ \\ b \in S^-} } |w^*(ax) - w^*(bx)| \\
&= (n/4) \sum_{ \substack{a \in S^+ \\ b \in S^-} } |w^*(ax) - \beta_x| + |w^*(bx) - \beta_x|\\
&\stackrel{(\ref{eq:ave})}{=} (n/4)\left( \frac{1}{2}(|S^-| + |S^+|)
\sum_{v \in \overline{X}} |w^*(vx) - \beta_x|  \right).
\end{align*}
Using that $|S^+| + |S^-| \geq |\overline{X}|$ and $\beta_x \in [0,1]$, the last expression is bounded below by
\[
 (n |\overline{X}|/8) \sum_{v \in \overline{X}} (|w^*(vx)| - 1) 
\geq (n^2/16) \sum_{v \in \overline{X}} (|w^*(vx)| - 1)
\]
for $n$ sufficiently large. 
Finally, we have
\begin{align*}
kn^3 \geq \|(n,w)\|_{4-{\rm cyc}} 
&\geq \bal \bigg( \bigcup_{ \substack{ x\in X \\ ab \in \overline{X}^{(2)} }}A_{x, ab} \bigg)
\geq (n^2/16) \sum_{\substack{x \in X \\ v \in \overline{X}}} (|w^*(vx)| - 1) \\
&= (n^2/16) (w^*[K_n[X,\overline{X}]] - |X||\overline{X}|) \\
&\geq (n^2/16)(w^*[K_n[X,\overline{X}]] - 400kn),
\end{align*}
from which we obtain that $w^*[K_n[X,\overline{X}]] \leq 416kn$.
\qed
\end{proof}

\medskip
\noindent
{\bf Claim~3} We have $w^*[K_n[X]] \leq 3208kn$.
\begin{proof} (of Claim~3)
For each $xy \in E(F^*) \cap X^{(2)}$, define $A_{xy}$ to be the set of all $4$-cycles of the form $C=xyuvx$, where $u,v \in \overline{X}$. For fixed $xy \in E(F^*) \cap X^{(2)}$, we have
\begin{align*}
\bal(A_{xy}) 
&= \sum_{\substack{u,v \in \overline{X} \\ u \not= v}}|w^*(xy) + w^*(uv) - w^*(xv) - w^*(yu)| \\
&\geq \bigg| \sum_{\substack{u,v \in \overline{X} \\ u \not= v}}w^*(xy) + w^*(uv) - w^*(xv) - w^*(yu) \bigg| \\
&= \bigg| |\overline{X}|(|\overline{X}|-1)w^*(xy) \;+\; 2w^*(K_n[\overline{X}])
\;-\; (|\overline{X}| -1) (\beta_x + \beta_y)|\overline{X}|
 \bigg| \\
&\geq  |\overline{X}|(|\overline{X}|-1)|w^*(xy)| \;-\; 2w^*[K_n[\overline{X}]]
\;-\; 2|\overline{X}|(|\overline{X}| -1),
\end{align*}
where $\beta_x, \beta_y \in [0,1]$ are as defined in the previous claim. For our choice of large $n$, 
we have $n/2 \leq |\overline{X}| \leq n$ and so $n^2/8 \leq |\overline{X}|(|\overline{X}| - 1) \leq n^2$. By Claim~1, we have $w^*[K_n[\overline{X}] \leq 2kn \leq n^2$. Putting this together, the final expression above is at most $\frac{1}{8}n^2|w^*(xy)| - 4n^2$. Finally
\begin{align*}
kn^3 \geq \|(n,w^*)\|_{4-{\rm cyc}} 
&\geq \bal (\bigcup_{xy \in X^{(2)}}A_{xy}) 
\geq  \sum_{xy \in X^{(2)}}\frac{1}{8}n^2|w^*(xy)| - 4n^2 \\
&\geq (n^2/8) w^*[K_n[X]] - 4(400k)^2n^2 \\
&\geq (n^2/8) w^*[K_n[X]] - 400kn^3,
\end{align*}
where the last inequality holds by the choice of large $n$. Rearranging shows that $w^*[K_n[X]] \leq 3208kn$.
\qed
\end{proof}
This completes the proof of the lemma.
\qed
\end{proof}

We can now combine Lemma~\ref{le:ave} and Lemma~\ref{le:norm}  to prove the main result of this section, Theorem~\ref{th:main2}.

\begin{proof}
(of Theorem~\ref{th:main2})
Given an instance $(n,w)$ and $k \in \mathbb{N}$ with $n> 5000(k+1)$, if $\|(n,w)\|_{4-{\rm cyc}} \geq kn^3$ then by Lemma~\ref{le:ave}, in time $O(n^7)$, we can find a Hamilton cycle $H$ of $K_n$ satisfying $w(H) < dn - k$. If $\|(n,w)\|_{4-{\rm cyc}} \leq kn^3$ then in time $O(kn^5) = O(n^7)$, we can find an equivalent instance $(n,w')$ satisfying $w'[K_n] \leq 4000kn$.
\qed
\end{proof}




\section{Algorithms}
\label{se:algorithms}

In this section we prove Theorem~\ref{th:main}.
Let $(n,w,k)$ be an instance of ${\rm TSP}_{\rm{BA}}$ and let $X \subseteq V(K_n)$. An $(X)$-partial Hamilton cycle of $K_n$ is any spanning subgraph $G$ of $K_n$ that can be obtained by taking a Hamilton cycle of $K_n$ and deleting all its edges that lie in $E(K_n[\overline{X}])$. 

Suppose $G$ is a spanning subgraph of $K_n$. It is not hard to see that $G$ is an $(X)$-partial Hamilton cycle of $K_n$ if and only if
\begin{itemize}
\item[(a)] $G$ consists of vertex disjoint paths $P_1, \ldots, P_r$ for some $r$ (where we allow trivial paths);
\item[(b)] every $x \in X$ is an internal vertex of some $P_i$;
\item[(c)] no edge of $E(K_n[\overline{X}])$ is present in $G$.
\end{itemize}
For $X \subseteq V(K_n)$, we define $\overline{X}:= V(K_n) \setminus X$. 

\begin{lemma}
Let $(n,w,k)$ be an instance of ${\rm TSP}_{\rm{BA}}$ and let $X \subseteq V(K_n)$ with $t:=|X|$. We can find an $(X)$-partial Hamilton cycle of $K_n$ of minimum weight in time $O(t^3)! + O(t^3n)$.
\end{lemma}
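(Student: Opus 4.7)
The plan is to split the problem into a \emph{discrete} choice for how the vertices of $X$ are arranged inside the paths, and an \emph{assignment} choice for the $\overline{X}$-vertices that fill the remaining positions. The structural characterisation (a)--(c) forces every non-trivial path to have both endpoints in $\overline{X}$ and to be of the form $v_0\,x_1\,(v_1?)\,x_2\,(v_2?)\cdots x_m\,v_m$, where the $x_i$ are distinct elements of $X$ (in some order), $v_0,v_m\in\overline{X}$ are mandatory, and each $v_i\in\overline{X}$ for $1\leq i\leq m-1$ is optional: its absence means $x_ix_{i+1}$ is an edge of $G$, and by (c) at most one such $\overline{X}$-vertex may sit between consecutive $x_i$'s. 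Trivial paths are singleton vertices of $\overline{X}$. Call a \emph{configuration} a partition of $X$ into ordered sequences together with, for each consecutive pair within a sequence, a bit recording ``adjacent'' versus ``bridged by an $\overline{X}$-slot''. A configuration specifies $s\leq 2t$ \emph{slots} (two endpoint-slots per non-trivial path and one bridge-slot per ``bridged'' pair), and once it is fixed, the weight of the $(X)$-partial Hamilton cycle equals the sum of $w(x_ix_{i+1})$ over adjacent $X$-pairs (which depends on the configuration alone) plus $w(vx)$ for each endpoint-slot at $x$ filled by $v$ and $w(vx)+w(vx')$ for each bridge-slot between $x,x'$ filled by $v$; unused $\overline{X}$-vertices become trivial paths at zero cost.

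I would then enumerate all configurations; a straightforward count (an ordering of $X$ together with, at each of the $t-1$ gaps, one of three choices: adjacent-in-same-path, bridged-in-same-path, or start-new-path) bounds the number by $t!\cdot 3^{t-1}=t^{O(t)}\leq O(t^3)!$. For each configuration, I must solve the slot-assignment problem, i.e.\ an unbalanced min-cost bipartite matching from the $s\leq 2t$ slots to distinct $\overline{X}$-vertices. The key reduction is that it suffices to consider, for each slot $\sigma$, only the set $C_\sigma$ of its top $s$ candidates in $\overline{X}$ (those minimising the slot-cost function of $\sigma$): if some optimum puts $v\notin C_\sigma$ into $\sigma$, then $|C_\sigma|=s$ while only $s-1$ other slots exist, so some $v^*\in C_\sigma$ is unused, and swapping $v^*$ into $\sigma$ while demoting $v$ to a trivial path does not increase the cost. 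Hence I restrict the assignment to $\bigcup_\sigma C_\sigma$, of size $O(s^2)=O(t^2)$, and solve it by the Hungarian algorithm in $\mathrm{poly}(t)$ time per configuration.

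As preprocessing I compute, for each $x\in X$ and each pair $\{x,x'\}\in X^{(2)}$, the top $2t$ candidates in $\overline{X}$ together with their costs, using linear-time selection; this takes $O(t^2 n)\leq O(t^3 n)$ time. The best configuration together with its optimal slot-assignment immediately yields the required $(X)$-partial Hamilton cycle in $O(t)$ reconstruction time. Summing up, the enumeration-plus-assignment phase runs in $t^{O(t)}\cdot\mathrm{poly}(t)\leq O(t^3)!$ time independent of $n$, and preprocessing contributes $O(t^3 n)$, giving the claimed $O(t^3)!+O(t^3 n)$ bound. The main technical step is the top-$s$ candidate argument, which decouples the (potentially $\Theta(n)$-sized) choice of $\overline{X}$-vertices from the enumeration over $X$-configurations; everything else is routine bookkeeping.
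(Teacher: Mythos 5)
Your proof is correct, and its engine is the same as the paper's: the observation that each ``slot'' adjacent to a vertex $x\in X$ (resp.\ a pair $\{x,x'\}$) need only be filled from among the $O(t)$ vertices of $\overline{X}$ minimising $w(vx)$ (resp.\ $w(vx)+w(vx')$), justified by a counting-plus-swap argument in which the displaced vertex is demoted to a trivial path at zero cost. This is precisely the paper's Claim~1, which defines the candidate edge sets $M^1_{2t+1}(X)$ and $M^2_{2t+1}(X)$ and shows that some minimum-weight $(X)$-partial Hamilton cycle uses only those edges between $X$ and $\overline{X}$. Where you genuinely diverge is in how the search is then organised. The paper forms the kernel $X\cup Y$ of size $O(t^3)$ and brute-forces over \emph{all} Hamilton cycles of $K_n[X\cup Y]$, which is the sole source of the $O(t^3)!$ term. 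You instead make explicit the structure of the paths (both endpoints in $\overline{X}$, $X$-vertices consecutive or separated by exactly one $\overline{X}$-vertex --- a correct consequence of conditions (b) and (c)), enumerate the $t!\cdot 3^{t-1}$ configurations of $X$, and solve each residual slot-assignment by min-cost bipartite matching on a candidate pool of size $O(t^2)$. Your route costs a little more bookkeeping (the path-shape analysis, and the minor point that $C_\sigma$ should be taken of size $\min(s,|\overline{X}|)$ so the matching exists), but it buys a genuinely sharper running time of $t^{O(t)}+O(t^3n)$ in place of $O(t^3)!+O(t^3n)$; both comfortably fit the claimed bound.
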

\begin{proof}
For each $a \in X$, let $y_1^a, y_2^a, \ldots$ be an ordering of the vertices of $\overline{X}$ such that $w(ay_1^a) \leq w(ay_2^a) \leq \ldots$. For each $ab \in X^{(2)}$, let $y_1^{ab}, y_2^{ab}, \ldots$ be an ordering of the vertices of $\overline{X}$ such that $w(ay_1^{ab}) + w(by_1^{ab}) \leq w(ay_2^{ab}) + w(by_2^{ab}) \leq \ldots$.

For each positive integer $\ell$, define
\[
Y^1_{\ell}(X):=\bigcup_{a \in X}\bigcup_{i=1}^{\ell} \{y_i^a\}\subseteq \overline{X} 
\:\:\:\text{and}\:\:\:
Y^2_{\ell}(X):=\bigcup_{ab \in X^{(2)}}\bigcup_{i=1}^{\ell}
\{y_i^{ab}\}\subseteq \overline{X}
\] 
and define
\[
M^1_{\ell}(X):=\bigcup_{a \in X}\bigcup_{i=1}^{\ell} \{ay_i^a\} 
\:\:\:\text{and}\:\:\:
M^2_{\ell}(X):=\bigcup_{ab \in X^{(2)}}\bigcup_{i=1}^{\ell} \{ay_i^{ab}, by_i^{ab}\}.
\]
Finally, let $Y:=Y^1_{2t+1}(X) \cup Y^2_{2t+1}(X)$ and let $M:=M^1_{2t+1}(X) \cup M^2_{2t+1}(X)$.

\medskip
\noindent
{\bf Claim 1} Suppose $G$ is an $(X)$-partial Hamilton cycle of $K_n$ of minimum weight. Subject to this, assume further that $|E_G(X, \overline{X}) \setminus M|$ is minimised. Then $E_G(X, \overline{X}) \subseteq M$. Note that in particular, this implies that the non-trivial paths of $G$ all lie in $X \cup Y$.   
\begin{proof}(of Claim 1)
Suppose $rs \in E_G(X, \overline{X})$ with $r \in X$ and $s \in \overline{X}$ but $rs \not \in M$. 
Since $G$ is an $(X)$-partial Hamilton cycle, we have $|N_G(s)| = 1,2$ and $N_G(s) \subseteq X$ (since $G$ has no edges in $E(K_n[\overline{X}])$. Let $N_G(s)=\{r, r' \}$, where $r=r'$ if $|N_G(s)| =1$.

Let $A$  be the set of all $x \in \overline{X}$ satisfying $N_G(x)>0$;
then 
$|A| \leq 2|X| = 2t$  since $N_G(x) \subseteq X$ for all $x \in \overline{X}$ and $d_G(x) = 2$ for all $x \in X$.
 Note also that, for every $a \in \overline{X} \setminus A$, the graph $G'$ obtained by deleting the edges $rs, r's$ and replacing them with the edges $ra, r'a$ is an $(X)$-partial Hamilton cycle.

Suppose $|N_G(s)|=1$. Pick any $i \in [2t+1]$ such that $y_i^r \not \in A$ (this is possible since $|A| \leq 2t$) and replace $rs$ with $ry_i^r$ in $G$ to form $G'$. Then $G'$ is an $(X)$-partial Hamilton cycle, and $w(G') \leq w(G)$ since $w(ry_i^r) \leq w(rs)$ (by the definition of $y_i^r$). However $|E_{G'}(X, \overline{X}) \setminus M| < |E_G(X, \overline{X}) \setminus M|$ (since $rs \not \in M$ and $ry_i^r \in M$), a contradiction.

Suppose $|N_G(s)|=2$. Pick any $i \in [2t+1]$ such that $y_i^{rr'} \not \in A$ (this is possible since $|A| \leq 2t$) and replace $rs, r's$ with $ry_i^{rr'}, r'y_i^{rr'}$ in $G$ to form $G'$. Then $G'$ is an $(X)$-partial Hamilton cycle, and $w(G') \leq w(G)$ since $w(ry_i^{rr'}) + w(r'y_i^{rr'}) \leq w(rs) + w(r's)$ (by the definition of $y_i^{rr'}$. However $|E_{G'}(X, \overline{X}) \setminus M| < |E_G(X, \overline{X}) \setminus M|$ (since $rs, r's \in M$ and $ry_i^{rr'}, r'y_i^{rr'} \not\in M$), a contradiction.
\qed
\end{proof}

By Claim 1, in order to find an $X$-partial Hamilton cycle of $K_n$ of minimum weight, it is sufficient to find an $(X)$-partial Hamilton cycle of $K_n[X \cup Y]$ of minimum weight. We can do this in time $O(t^3)! + O(t^3n)$ by brute force as follows. 

We first determine $Y$; this can be done in time $O(t^3n)$ since $Y$ has size at most $O(t^3)$. We find all Hamilton cycles of $K_n[X \cup Y]$, which takes time $O(|X|+|Y|)! = O(t^3)!$.  For each such Hamilton cycle, we delete its edges in $Y^{(2)}$ to obtain an $(X)$-partial Hamilton cycle $H'$ and we determine $w(H')$ (this takes time $O(t^3)O(t^3)!=O(t^3)!$). We find the $(X)$-partial Hamilton cycle of $K_n[X \cup Y]$ of minimum weight, and its non-trivial paths form the non-trivial paths of an $(X)$-partial Hamilton cycle of $K_n$ of minimum weight.
\qed
\end{proof}

Given an instance $(n,w,k)$ of ${\rm TSP}_{\rm{BA}}$, we define the subgraph $K_w^+$ (resp.\ $K_w^-$, $K_w^0$) of $K_n$ to have vertex set $V(K_n)$ and to have edge set consisting of the  edges assigned a positive (resp.\ negative, zero) weight by $w$.

\begin{lemma}
\label{le:ker}
Suppose that $(n,w,k)$ is an instance of ${\rm TSP}_{\rm{BA}}$ and $X \subseteq V(K_n)$ with $|X|=t$ satisfies the following properties:
\begin{itemize}
\item[(a)] $w(e) \geq 0$ for all $e \in E(K_n[\overline{X}])$;
\item[(b)] $\delta(K_w^0[\overline{X}]) \geq \frac{1}{2}|\overline{X}| + 4t$.
\end{itemize}
Then we can find a minimum weight Hamilton cycle $H^*$ of $K_n$ 
in time $O(t^3)! + O(t^3n + n^3)$.
\end{lemma}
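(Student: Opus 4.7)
The plan is to combine the previous lemma with an extension argument that exploits the high minimum degree of $K_w^0[\overline{X}]$.

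First, I apply the previous lemma to find, in time $O(t^3)! + O(t^3 n)$, a minimum weight $(X)$-partial Hamilton cycle $G^\star$ of $K_n$. Every Hamilton cycle $H$ of $K_n$ decomposes as $H = (H \setminus E(K_n[\overline{X}])) \cup (H \cap E(K_n[\overline{X}]))$, where the first summand is an $(X)$-partial Hamilton cycle and the second consists of edges of non-negative weight by (a). Hence $w(H) \geq w(G^\star)$ for every Hamilton cycle $H$, and if I can extend $G^\star$ to a Hamilton cycle $H^\star$ by adding only edges drawn from $K_w^0[\overline{X}]$, then $w(H^\star) = w(G^\star)$ and $H^\star$ is optimal.

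Let $P_1,\dots,P_r$ denote the non-trivial paths of $G^\star$, with endpoints $u_i,v_i\in\overline{X}$ (endpoints lie in $\overline{X}$ because every $X$-vertex is internal). Since no edge of $K_n[\overline{X}]$ lies in $G^\star$, a non-trivial path with only $\overline{X}$-vertices would be forbidden; hence every non-trivial path contains at least one internal $X$-vertex, giving $r\leq t$. Finding the desired extension is then equivalent to finding a Hamilton cycle in the multigraph $K_w^0[\overline{X}]\cup M$ that uses every edge of the virtual matching $M:=\{u_iv_i:1\leq i\leq r\}$ on $\overline{X}$.

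The main obstacle is to prove such a Hamilton-cycle-through-matching exists and can be computed in $O(n^3)$ time using only (b). Set $N:=|\overline{X}|$ and $G_0:=K_w^0[\overline{X}]$. I would contract each virtual edge $u_iv_i\in M$ to a single vertex $w_i$, obtaining a multigraph $G_0/M$ on $N-r$ vertices. A short degree count gives $\delta(G_0/M)\geq N/2+4t-r\geq (N-r)/2+3t$, so $G_0/M$ contains a Hamilton cycle by Dirac's theorem, constructible in $O(n^3)$ time via the algorithmic proof of Dirac (repeated rotations). The subtlety is the lifting step: at each contracted vertex $w_i$, the two cycle-neighbors must correspond, back in $G_0$, to one neighbor of $u_i$ and one of $v_i$, not two of the same side. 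The $4t$ slack in (b) is exactly what makes this possible: inclusion-exclusion gives $|N_{G_0}(u_i)\cap N_{G_0}(v_i)|\geq 8t$ for each $i$, providing a reservoir of ``flexible'' vertices that can serve as a cycle-neighbor of $w_i$ on either side. A standard local rotation/swap at each of the $r\leq t$ contracted vertices, performed sequentially with no conflict thanks to the size of the common-neighborhood reservoir, converts the initial Hamilton cycle into one that lifts properly. Each rotation costs $O(n^2)$, giving $O(tn^2)=O(n^3)$ total for the lifting. Summing, the running time is $O(t^3)!+O(t^3n)$ for producing $G^\star$ plus $O(n^3)$ for constructing and lifting the Hamilton cycle, matching the claimed $O(t^3)!+O(t^3n+n^3)$ bound.
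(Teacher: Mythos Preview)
Your overall strategy matches the paper's: find a minimum-weight $(X)$-partial Hamilton cycle $G^\star$, then complete it to a Hamilton cycle using only zero-weight edges in $\overline{X}$, so the optimality argument via condition~(a) goes through exactly as you describe. However, there is a genuine gap in the reduction step.

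You assert that extending $G^\star$ is equivalent to finding a Hamilton cycle in $K_w^0[\overline{X}]\cup M$ through the virtual matching $M=\{u_iv_i\}$. This is not correct, because non-trivial paths of $G^\star$ may have \emph{internal} vertices in $\overline{X}$ as well: any $\overline{X}$-vertex sandwiched between two $X$-vertices on a path is internal. Such a vertex would then be visited twice in your final construction --- once as an ordinary vertex of the Hamilton cycle on $\overline{X}$, and once again when you expand the virtual edge $u_iv_i$ back to the full path $P_i$. The paper avoids this by letting $X'$ be the set of \emph{all} internal vertices of the $P_i$ (so $X\subseteq X'$ and $|X'|\le 2t$), and then working in $K_w^0[\overline{X'}]$ rather than $K_w^0[\overline{X}]$. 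The degree hypothesis~(b) survives this enlargement with enough slack: one checks $\delta(K_w^0[\overline{X'}])\ge \tfrac{1}{2}|\overline{X'}|+\tfrac{3}{2}r$.

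Once this is fixed, your contraction-and-lift approach would likely work, but it is more delicate than necessary and the lifting argument is only sketched. The paper instead appeals directly to an algorithmic Dirac-type lemma (a Hamilton cycle through a prescribed set of $r$ independent edges exists, and is found in $O(n^3)$ time, once the minimum degree exceeds $\tfrac{1}{2}|\overline{X'}|+\tfrac{3}{2}r$), which cleanly bypasses the side-consistency issue you raise.
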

\begin{proof}
In time $O(t^3)! + O(t^3n)$, we can find an $(X)$-partial Hamilton cycle $G^*$ of $K_n$ of minimum weight. Let $P_1, \ldots, P_r$ be the non-trivial paths of $G^*$ and note that $r \leq t$. Let $a_i,b_i \in \overline{X}$ be the end-points of $P_i$ and let $e_i=a_ib_i$. 

Let $X'$ be the set of internal vertices of $P_1, \ldots, P_r$. Thus $X \subseteq X'$ and it is not hard to see that $|X'| \leq 2|X| = 2t$. Thus 
\begin{align*}
\delta(K_w^0[\overline{X'}]) \geq \delta(K_w^0[\overline{X}]) - |\overline{X} \setminus \overline{X'}| 
&\geq \delta(K_w^0[\overline{X}]) - 2t 
\geq \frac{1}{2}|\overline{X}| + 4t - 2t \\
&\geq \frac{1}{2}|\overline{X'}| + 2t
\geq \frac{1}{2}|\overline{X'}| + \frac{3}{2}r.
\end{align*}
By an algorithmic version of Dirac's Theorem (see e.g.\ \cite{KuhOstPat} Lemma 5.11), we can find, in time $O(n^3)$, a Hamilton cycle $H$ of $K_w^0[\overline{X'}] \cup \{e_1, \ldots, e_r \}$ that contains all the edges $e_1, \ldots, e_r$. Replacing each edge $e_i$ with the path $P_i$ gives a Hamilton cycle $H^*$ of $K_n$, which we claim has minimum weight.

Indeed, let $H$ be any Hamilton cycle of $K_n$. 
Let $G$ be obtained from $H$ by deleting all the edges of $H$ in $\overline{X}^{(2)}$; thus $G$ is an $(X)$-partial Hamilton cycle, so $w(G^*) \leq w(G)$. Note also by condition (i) that $w(H[\overline{X}]) \geq 0$, whereas by construction $w(H^*[\overline{X}])=0$. Therefore
\begin{align*}
w(H) 
= w(G) + w(H[\overline{X}]) 
\geq w(G^*) + w(H^*[\overline{X}])
=w(H^*)
\end{align*}
showing that $H^*$ is a minimum weight Hamilton cycle of $K_n$.
\qed
\end{proof}

Finally we can prove Theorem~\ref{th:main}.

\begin{proof} (of Theorem~\ref{th:main})
We assume $n > ck$ for a sufficiently large constant $c$; otherwise we can find a Hamilton cycle of minimum weight by brute force. We describe the steps of the algorithm and give the running time in brackets.
\begin{itemize}
\item[1.] Given $(n,w,k)$, by Theorem~\ref{th:main2}, either we can output a Hamilton cycle $H^*$ satisfying $w(H^*) < dn -k$ (and we are done) or we we can find an equivalent instance $(n,w',k)$ where $w'[K_n] \leq 4000kn$.
(time $O(n^7)$)
\item[2.] Find a maximum matching $Q$ of $K_{w'}^-$, the subgraph of $K_n$ containing the edges assigned a negative weight $w'$. Let $q$ be the number of edges of $Q$ and set $d':=w'(K_n)/ \binom{n}{2}$. (time $O(n^3)$; see \cite{Edm} for a polynomial-time maximum matching algorithm)
\item[3.] If $q > 10^5k$ then $\mathbb{E}(w'(\tilde{H}^Q)) \leq d'n-k$ (see Claim 1) and so by Lemma~\ref{le:derand2}, we can find a Hamilton cycle $H^*$ of $K_n$ in time $O(n^5)$ satisfying $w'(H^*) \leq \mathbb{E}(w'(\tilde{H}^Q)) \leq d'n - k$ and hence $w(H^*) \leq dn -k$. We output $H^*$ and stop. (time $O(n^5)$)
\item[4.] If $q \leq 10^5k$, then construct the set $X \subseteq V(K_n)$ such that $X = V(Q) \cup \{v \in V(K_n) \mid d_{K_{w'}^+}(v) \geq \frac{1}{4}n \}$. 
(time $O(n^2)$)
\item[5.] Using the properties of $X$ proved in Claim 2, we can apply Lemma~\ref{le:ker} to $(n,w',k), X$ to find a Hamilton cycle of minimum weight in $K_n$ (w.r.t. $w'$). If $w'(H^*) \leq d'n-k$ then output $H^*$ and note $w(H^*) \leq dn - k$. Otherwise we conclude there is no Hamilton cycle beating the average by at least $k$.  (time $O(k^3)! + O(k^3n + n^3)$) 
\end{itemize}

\medskip
\noindent
{\bf Claim 1} If $q > 10^5k$ then $\mathbb{E}(w'(\tilde{H}^Q)) \leq d'n-k$.
\begin{proof} (of Claim 1)
If $q > 10^5k$ we have
\begin{align*}
\mathbb{E}(w'(\tilde{H}^Q)) 
&= w'(Q) + \sum_{e \in J(Q)} \mathbb{P}(e \in \tilde{H}^Q)w'(e)\\ 
&\leq -10^5k + (4/(n-2)) \sum_{e \in J(Q)} w'(e) \\
&\leq -10^5k + (4/(n-2))4000kn \\
&\leq -50000k, 
\end{align*}
where we have used that $w'(Q)\leq -q$ (since $Q \subseteq K_{w'}^-$) and $\mathbb{P}(e \in \tilde{H}^Q) \leq 4/(n-2)$ (using (\ref{eq:prob})) for the first inequality and where we have used that $w'[K_n] \leq 4000kn$ for the second inequality, and that $n$ is large for the third inequality.
Note also that
\begin{align*}
d'n - k = \binom{n}{2}^{-1}w'(K_n)n - k 
\geq -4000k\frac{n}{n-1} - k 
&\geq - 50000k \\
&\geq \mathbb{E}(w'(\tilde{H}^Q)),
\end{align*}
where the we have used that $n$ is large enough for the second inequality. This gives the desired result.
\qed
\end{proof}

\medskip
\noindent
{\bf Claim 2} 
For the set $X$ defined in Step 4, we have
\begin{itemize}
\item[(a)] $t:=|X| \leq 3\cdot 10^5k$;
\item[(b)] $w'(e) \geq 0$ for all $e \in E(K_n[\overline{X}])$;
\item[(c)] $\delta(K_{w'}^0[\overline{X}]) \geq \frac{1}{2}|\overline{X}| + 4t$.
\end{itemize}
\begin{proof} (of Claim 2)
(a) We have $X = V(Q) \cup S$ where $S:= \{v \in V(K_n) \mid d_{K_{w'}^+}(v) \geq \frac{1}{4}n \}$. We show that $|S| \leq 10^5k$ and since $|V(Q)| \leq 2q \leq 2\cdot 10^5k$, we have $|X| \leq 3\cdot 10^5k$ as required. 

Observe that
\begin{align*}
\frac{1}{4}n|S| \leq 2e(K_{w'}^+) \leq 8000kn, 
\end{align*}
which implies $|S| \leq 32000k \leq 10^5k$, as required.

(b) follows from the construction of $X$, since $X$ contains all vertices of a maximum matching from the graph of negatively weighted edges. 

To prove (c) observe that for each $x \in \overline{X}$
\begin{align*}
d_{K_{w'}^0[\overline{X}]}(x) &= (n-1) - |X|
- d_{K_{w'}^+[\overline{X}]}(x)
- d_{K_{w'}^-[\overline{X}]}(x) \\
&\geq (n-1) -3 \cdot 10^5 k - (n/4) - 0 \\
&\geq (3/5)n - 3\cdot 10^5k  \\
&\geq (1/2)n + 4t,
\end{align*}
where the last inequality holds for $n$ large enough.
\qed
\end{proof}
This completes the proof of the theorem.
\qed
\end{proof}

\section{Concluding Remarks}

We believe that an analogue of our result ought to hold for the Asymmetric Travelling Salesman Problem. Formally, let $\overset\leftrightarrow{K}_n$ denote the complete directed graph on $n$ vertices, so that there are two edges, one in each direction, between each pair of vertices.

\bigskip
\noindent
\textsc{Asymmetric Travelling Salesman Problem Below Average (${\rm ATSP}_{\rm{BA}}$)} 

\medskip
\noindent
\begin{tabular}{p{1.7cm}p{11cm}}
\textit{Instance}\,:& $(n,w,k)$, where $n,k \in \mathbb{N}$ and $w: E(\overset\leftrightarrow{K}_n) \rightarrow \mathbb{Z}$\\
\textit{Question}\,:&Is there a directed Hamilton cycle $H^*$ of $\overset\leftrightarrow{K}_n$ satisfying $w(H^*) \leq dn - k$, (where $d$ is the average weight of an edge of $\overset\leftrightarrow{K}_n$)? 
\end{tabular}

\bigskip

\begin{problem}
Is ${\rm ATSP}_{\rm{BA}}$ fixed parameter tractable when parameterised by $k$?
\end{problem}
We believe the answer to this question is yes. However, several of the methods in this paper do not generalise in a straightforward way to directed graphs, so we expect that several new ideas will be needed to solve the problem above.

\end{document}